\documentclass[12pt]{article}

\usepackage{amsmath}
\usepackage{graphicx}
\usepackage{framed, color} 
\usepackage[margin=.7in]{geometry}
\usepackage{pdflscape}
\definecolor{shadecolor}{rgb}{1, 0.8, 0.3}
\usepackage{epstopdf}

\usepackage{amssymb}
\usepackage{amsthm}

\usepackage{amsfonts}
\usepackage{cite}
\usepackage{ifthen}
\usepackage{epsfig}
\usepackage{subcaption}
\usepackage{array}
\usepackage{enumerate}
\usepackage{algorithm}
\usepackage[noend]{algpseudocode} 
\usepackage[all]{xy}
\usepackage{algpseudocode}
\usepackage[compact]{titlesec}
\usepackage{url}

\usepackage{times}
\usepackage{tikz}
\usepackage{float}
\usetikzlibrary{shapes,positioning,arrows,automata}
\usepackage{caption}
\captionsetup{font=small}
\newtheorem{theorem}{Theorem}[section]
\newtheorem{lemma}[theorem]{Lemma}
\newtheorem{proposition}[theorem]{Proposition}
\newtheorem{corollary}[theorem]{Corollary}

\newcommand{\bi}{\ensuremath{\mathbf{i}}}
\newcommand{\om}{\ensuremath{\omega}}

\newcommand{\PP}{\ensuremath{\mathbb{P}}}
\newcommand{\bd}{\ensuremath{\bar{d}}}
\newcommand{\ty}{\ensuremath{\tilde{y}}}

\newcommand{\beq}{\begin{equation}}
\newcommand{\eeq}{\end{equation}}
\newcommand{\bea}{\begin{eqnarray}}
\newcommand{\eea}{\end{eqnarray}}
\newcommand{\bean}{\begin{eqnarray*}}
\newcommand{\eean}{\end{eqnarray*}}
\newcommand{\bit}{\begin{itemize}}
\newcommand{\eit}{\end{itemize}}
\newcommand{\ben}{\begin{enumerate}}
\newcommand{\een}{\end{enumerate}}
\newcommand{\blem}{\begin{lem}}
\newcommand{\elem}{\end{lem}}
\newcommand{\bthm}{\begin{thm}}
\newcommand{\ethm}{\end{thm}}
\newcommand{\bpf}{\begin{IEEEproof}}
\newcommand{\epf}{\end{IEEEproof}}

\newcommand{\comment}[1]{}

\newcommand{\algorithmfootnote}[2][\footnotesize]{%
  \let\old@algocf@finish\@algocf@finish% Store algorithm finish macro
  \def\@algocf@finish{\old@algocf@finish% Update finish macro to insert "footnote"
    \leavevmode\rlap{\begin{minipage}{\linewidth}
    #1#2
    \end{minipage}}%
  }%
}

\begin{document}

\title{Capacity-Approaching \emph{PhaseCode} for \\ Low-Complexity Compressive Phase Retrieval}
\author{Ramtin Pedarsani, Kangwook Lee, and Kannan Ramchandran\\ Dept. of Electrical Engineering and Computer Sciences\\ University of California, Berkeley\\ \{ramtin, kw1jjang, kannanr\}@eecs.berkeley.edu}
\date{}
\maketitle

\begin{abstract}
In this paper, we tackle the general compressive phase retrieval problem. The problem is to recover (to within a global phase uncertainty) a $K$-sparse complex vector of length $n$, $x \in \mathbb{C}^n$, from the magnitudes of $m$ linear measurements, $y = |Ax|$, where $A \in \mathbb{C}^{m \times n}$ can be designed, and the magnitudes are taken component-wise for vector $Ax \in \mathbb{C}^m$. We propose a variant of the PhaseCode algorithm, first introduced in \cite{PLR14}, and show that, using an irregular left-degree sparse-graph code construction, the algorithm can recover almost all the $K$ non-zero signal components using only slightly more than $4K$ measurements under some mild assumptions, with order-optimal time and memory complexity of $\mathcal{O}(K)$. It is known that the fundamental limit for the number of measurements in compressive phase retrieval problem is $4K - o(K)$ \cite{Tarokh,Heinosaari}. To the best of our knowledge, this is the first constructive \emph{capacity-approaching} compressive phase retrieval algorithm: in fact, our algorithm is also order-optimal in complexity and memory.

As a  second contribution of this paper, we propose another variant of the PhaseCode algorithm that is based on a Compressive Sensing  framework involving sparse-graph codes.  Our proposed algorithm is an instance of a more powerful
``separation" architecture that can be used to address the compressive phase-retrieval problem in general.  This modular design features a compressive sensing outer layer, and a trigonometric-based phase-retrieval inner layer.  The compressive-sensing layer
 operates as a linear phase-aware compressive measurement subsystem, while the trig-based phase-retrieval layer
 provides the desired abstraction between the actually targeted nonlinear phase-retrieval problem  and the induced linear compressive-sensing problem.  Invoking this architecture based on the use of sparse-graph codes for the compressive sensing layer, we show that we can exactly recover a signal, to within a global phase, from only the magnitudes of its linear measurements using only slightly more than $6K$ measurements. 
\end{abstract}

\section{Introduction}\label{sec:intro}
The general compressive phase retrieval problem is to recover a $K$-sparse signal of length $n$, $x \in \mathbb{C}^n$, to within a global phase uncertainty, from the magnitude of $m$ linear measurements $y = |Ax|$, where $A \in \mathbb{C}^{m \times n}$ is the measurement matrix that can be designed, and the magnitude is taken on each component of the vector $Ax \in \mathbb{C}^m$.\footnote{This can be easily extended to the case where the signal $x$ is sparse with respect to some other basis, for example in Fourier domain, by a simple transformation of the measurement matrix.} In many applications such as  optics \cite{Walther}, X-ray crystallography \cite{Milane,Harrison}, astronomy \cite{Dainty}, ptychography \cite{Rodenburg}, quantum optics \cite{Mohammad}, etc., the phase of the measurements is not available. Furthermore, in many applications the signal of interest is sparse in some domain. For example, compressive phase retrieval has been used in a recent work in quantum optics\cite{Mohammad} to measure the transverse wavefunction of a photon.

The phase retrieval problem has been studied extensively in the literature.  Here, we only provide a brief review of related works on compressive phase retrieval, and refer the readers to our earlier work \cite{PLR14} for a more extensive literature review. 

To the best of our knowledge, the first algorithm for compressive phase retrieval was proposed by Moravec \emph{et al.} in \cite{Baraniuk}. This approach requires knowledge of the $\ell_1$ norm of the signal, making it impractical in most scenarios. The authors in \cite{Tarokh} showed that $4K-1$ measurements are theoretically sufficient to reconstruct the signal, but did not propose any low-complexity algorithm for reconstruction. The ``PhaseLift'' method originally proposed for the non-sparse case \cite{Candes1}, is also used for the sparse case in \cite{Sastry} and \cite{Li}, requiring $\mathcal{O}(K^2\log(n))$ intensity measurements, and having a computational complexity of $\mathcal{O}(n^3)$, making the method less practical for large-scale applications.   
An alternating minimization method is proposed in \cite{Sanghavi} for the sparse case with $\mathcal{O}(K^2\log(n))$ measurements and a computation complexity of $\mathcal{O}(K^3n \log(n))$.  
Compressive phase-retrieval  via generalized approximate message passing (PR-GAMP) is proposed in \cite{Rangan}, with good performance in both runtime and noise robustness shown via simulations. Jaganathan \emph{et al.} consider the phase retrieval problem from Fourier measurements \cite{Hassibi1,Hassibi2}. They propose an SDP-based algorithm, and show that the signal can be provably recovered with $\mathcal{O}(K^2 \log(n))$ Fourier measurements \cite{Hassibi1}. Cai \emph{et al.} propose proposed SUPER algorithm for compressive phase-retrieval in \cite{Jaggi}. The SUPER algorithm uses  $\mathcal{O}(K)$ measurements and features $\mathcal{O}(K\log(K))$ complexity with zero error floor asymptotically. Recently, Yapar \emph{et al.} propose a fast compressive phase retrieval algorithm using $4K\log(n/K)$ random Fourier measurements based on an algebraic phase retrieval stage and a compressive sensing step subsequent to it \cite{yapar}.

\subsection{Main Contribution}\label{sec:main}
The main contribution of this paper is to propose two variants of PhaseCode algorithm, which was first introduced in \cite{PLR14}. The measurement matrix in PhaseCode algorithm is constructed based on a {\em {sparse-graph code}}.  The work of \cite{PLR14}, which features a regular left-degree distribution in its sparse-graph code design, establishes that, with high probability, all but a fraction $10^{-7}$ (error floor) of the non-zero signal components can be recovered using about $14K$ measurements with optimal time and memory complexity.  

In this work, we first consider an {\em irregular} left-degree distribution for the sparse-graph code construction. We show that under this construction, the PhaseCode algorithm can recover almost all the non-zero components of the signal using only slightly more than $4K$ measurements under some mild assumptions, with optimal time and memory complexity of $\mathcal{O}(K)$.  It is well-known that $4K - o(K)$ measurements is the fundamental limit for unique recovery of $K$-sparse signal $x$ \cite{Tarokh,Heinosaari}. This shows that the irregular PhaseCode algorithm is \emph{capacity-approaching}.  We note a few caveats, which are why we call our algorithm capacity-approaching.  First, our irregular PhaseCode has an arbitrarily small error floor, which is however still non-vanishing as $K$ goes to infinity.  Secondly, in order to prove that the algorithm can exactly recover $x$, we need to make some mild assumptions on $x$.

Second, we propose a different two-layer PhaseCode algorithm: a compressive sensing layer and a phase retrieval layer.\footnote{While we were writing this paper, we became aware that a similar approach was used in \cite{yapar} in a contemporaneous work. Although
the high-level modularity concepts are similar, the proposed algorithms are significantly different.  Principally, our measurement matrix is designed using sparse-graph codes, making our algorithm distinct from that of \cite{yapar}.}  Our proposed algorithm 
is an instance of a more powerful
``separation" architecture that can be used to address the compressive phase-retrieval problem in general.  This modular design features a compressive sensing outer layer, and a trigonometric-based phase-retrieval inner layer.  The compressive-sensing layer
 operates as a linear phase-aware compressive measurement subsystem, while the trig-based phase-retrieval layer
 provides the desired abstraction between the targeted nonlinear phase-retrieval problem  and the induced linear compressive-sensing problem.  In the compressive sensing layer of the algorithm, we use the recent sparse-graph code based framework of \cite{Simon}, to design only $m_1 \simeq 2K$ measurements that guarantee perfect recovery for the compressive sensing problem, and in the phase-retrieval layer of the algorithm, we use $3m_1 -2$ deterministic measurements proposed in \cite{PLR14} that recover the phase of the $m_1$ measurements in phase and magnitude. Thus, we show that the signal can be reconstructed perfectly using only slightly more than $6K$ measurements.    

\section{Problem definition}
Consider a $K$-sparse complex signal $x \in \mathbb{C}^n$ of length $n$. Define $A \in \mathbb{C}^{m \times n}$ to be the measurement matrix that needs to be designed. The phase retrieval problem is to recover the signal $x$ from magnitude of linear measurements $y_i = |a_i x|$, where $a_i$ is the $i$-th row of matrix $A$. Figure \ref{fig:formulation} illustrates the block diagram of our problem. 

\begin{figure}[h]
\centering
    \includegraphics[width= 0.9\textwidth]{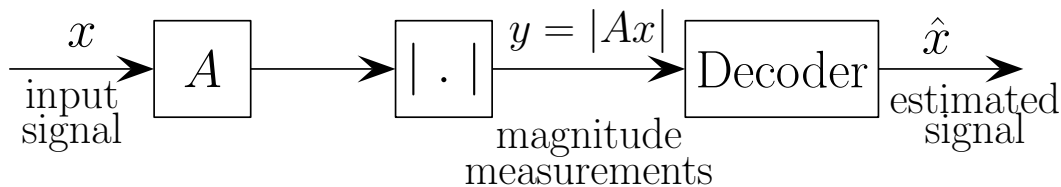}
  \caption{Block diagram of the general compressive phase retrieval problem. The measurements are $y_i = |a_i x|$, where $a_i$ is the $i$-th row of measurement matrix $A$. The objective is to design measurement matrix $A$ and the decoding algorithm to guarantee high reliability, while having small sample complexity as well as small time and memory complexity. \label{fig:formulation}} 
\end{figure}

The main objectives of the general compressive phase retrieval problem are to design matrix $A$, and the decoding algorithm to recover $x$, that satisfy the following objectives. 
\begin{itemize}
\item The number of measurements $m$ is as small as possible. Ideally, one wants $m$ to be close to the fundamental limit of $4K - o(K)$ \cite{Tarokh,Heinosaari}. 
\item The decoding algorithm is fast with low computational complexity and memory requirements. Ideally, one wants the time complexity and the memory complexity of the algorithm to be $\mathcal{O}(K)$, which is optimal. 
\item The reliability of the recovery algorithm should be maximized. Ideally, one wants the probability of failure to be vanishing as the problem parameters $K$ and $m$ get large.%\footnote{In this work, we are interested in the asymptotic $K$ regime. However, even when $K$ is small, with proper modification of our algorithm, high reliability can be guaranteed when $m$ gets large. We do not discuss this any further in the interest of presentation clarity.} 
\end{itemize}

\section{PhaseCode based on irregular left-degree sparse-graph codes}
We consider the same architecture used in \cite{PLR14} to design our measurement matrix, $A$. We define $A \in \mathbb{C}^{pM \times n}$ to be a row tensor product of a trigonometric modulation matrix, $T \in \mathbb{C}^{p \times n}$, and a code matrix, $H \in \mathbb{C}^{M \times n}$. The row tensor product of matrices $T$ and $H$ is defined as follows. If $A = T \otimes H = [A_1^T, A_2^T, \ldots, A_M^T]^T$ and $A_i \in \mathbb{C}^{p \times n}$, then, $A_i(jk) = T_{jk}H_{ik}, ~ 1\leq j \leq p, ~ 1 \leq k \leq n$. As an example, the row tensor product of matrices 
$$
T = \left [ \begin{array}{ccc}
0.1 & 0.2 & 0.3 \\
0.4 & 0.5 & 0.6 
\end{array} \right] 
 ~ \text{and} ~ 
H = \left [ \begin{array}{ccc}
0 & 1 & 0 \\
1 & 1 & 0 \\
0 & 0 & 1
\end{array} \right] 
$$
is 
$$
A = T \otimes H = \left [ \begin{array}{ccc}
0 & 0.2 & 0 \\
0 & 0.5 & 0 \\
0.1 & 0.2 & 0 \\
0.4 & 0.5 & 0 \\
0 & 0 & 0.3 \\
0 & 0 & 0.6
\end{array} \right].
$$ 
Our trigonometric modulation matrix is the same as the one we proposed in \cite{PLR14}; that is,
\begin{equation}\label{eq:T}
T = \left( \begin{array}{cccc}
e^{\bi \om} & e^{\bi 2 \om} & \ldots & e^{\bi n \om} \\
e^{-\bi \om} & e^{-\bi 2 \om} & \ldots & e^{-\bi n \om} \\
\cos(\om) & \cos(2\om) & \ldots & \cos(n \om) \\
e^{\bi \om '} & e^{\bi 2 \om '} & \ldots & e^{\bi n \om '}
\end{array}
\right),
\end{equation}
where $\om = \frac{2 \pi}{n}$ and $\om'$ is a random phase uniformly distributed in $[0,2\pi)$. 

\begin{figure}
\centering
    \includegraphics[width= 0.3\textwidth]{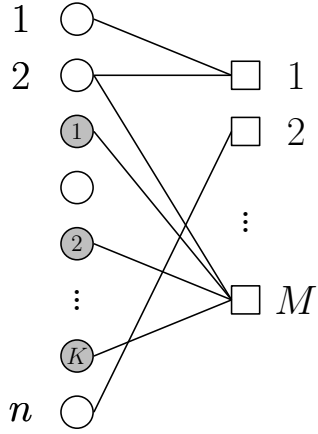}
  \caption{Illustration of a bipartite graph representation of the compressive phase-retrieval problem. In this graph, each left node refers to a signal component and each right node refers to a set of (four) measurements. The signal is $K$-sparse, so $K$ of these left nodes are active as illustrated in the figure. The four measurements associated with each right node correspond to the trigonometric-based modulation measurements of Equation (\ref{eq:T}).\label{fig:graph}} 
\end{figure}

Now we recall the key idea of the PhaseCode algorithm.\footnote{In this paper, we consider only the Unicolor PhaseCode algorithm proposed in \cite{PLR14}.} For more details, we refer the readers to \cite{PLR14}. The architecture of the PhaseCode algorithm is as follows. We consider a bipartite graph of $n$ left nodes and $M$ right nodes, where each left node is a signal component and each right node is a set of $4$ measurements. One can think of the $n$ left nodes as $n$ slots in which $K$ balls will be thrown arbitrarily. The $n$ slots refer to the indices of  the $n$-length signal, and the $K$ balls refer to the $K$ non-zero components of the signal. See Figure \ref{fig:graph} as an illustration.  Our goal is to design the bipartite graph such that the $K$ locations (belonging to the integer set between $1$ and $n$) of these balls (non-zero components) are detected. Furthermore, the relative phase and magnitude of the non-zero components need to be detected with the aid of the trigonometric modulation matrix. (See Subsection \ref{sec:trig} and \cite{PLR14} for details.)  

In order to explain our design for the phase-retrieval problem, it is illustrative for us to consider an equivalent {\em ball-coloring} 
problem in the following sense.  The goal of this equivalent problem is to detect and color all the balls in the system, where we say that a ball is colored if it is detected that it corresponds to an active signal component in that slot, and it is recovered in magnitude and phase (upto a global phase).  Thus, when balls get colored, their locations and complex-amplitudes are detected, where we may think of the color as the global coordinate-system on which we describe the unknown signal components.  Thus, the ball-coloring problem is equivalent to the targeted phase-retrieval problem.  However, the {\em rules} needed to color these balls are not yet clear, and need to be specified.  This is where our carefully-designed trignometric-modulated measurements of Eq. (\ref{eq:T}) come in.  Concretely, in our proposed construction, these trig-modulated measurements are used to enforce the following coloring rule.   \emph{If a right node of the bipartite graph is connected to $n' \geq 2$ balls and $n'-1$ of those balls are colored, the single uncolored ball can get colored.} See Figure \ref{fig:multiton} as an illustration.  What this means for our targeted phase-retrieval problem is that if we have already uncovered the locations and complex amplitudes (including phase) of $n'-1$ of the $n'$ signal components corresponding to a right node, then the trig-modulated measurements can be used to uncover the lone missing signal component; i.e., it too can be colored.   Of course, at the outset of the algorithm, no balls are colored. Therefore, the algorithm requires an initial seeding phase which will jumpstart the process by ensuring that a few balls get colored. (We will show how this can be realized in Subsection \ref{sec:initial}.) Given that in the initial phase of the algorithm, there are only a few colored balls in the graph, the goal is to design a bipartite graph such that the coloring spreads like an epidemic until all the balls get colored.   Given this exact equivalence to our original phase-retrieval problem, in the interests of illustrative clarity,
from now on, we mainly focus on the ball coloring problem, and connect it back to the phase retrieval problem in Subsection \ref{sec:trig}.

\begin{figure}
        \centering        
                \includegraphics[width=0.35\textwidth]{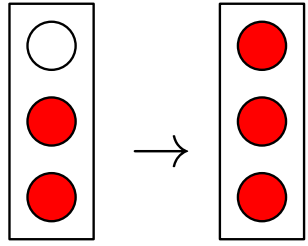}
        \caption{This figure illustrates when a bin contains one or more colored balls and {\em exactly one} uncolored ball, then the uncolored ball gets colored.\label{fig:multiton}}
\end{figure}

As mentioned earlier, we design our code matrix based on a random bipartite graph model. Define $\rho_i$ to be the probability that a randomly selected edge is connected to a right node of degree $i$, and $\lambda_i$ to be the probability that a randomly selected edge is connected to a left node of degree $i$. Define the edge degree distributions or edge degree polynomials of right and left nodes as follows.
\begin{align*}
\rho(x) &= \sum_{i \geq 1} \rho_i x^{i-1}; \\
\lambda(x) &= \sum_{i \geq 1} \lambda_i x^{i-1}.
\end{align*}

To analyze the PhaseCode algorithm, we use density evolution, which is a technique to analyze the performance of message passing algorithms on sparse-graph codes. We find a recursion relating the probability that a randomly chosen ball or left node in the graph is not colored after $j$ iterations of the algorithm, $p_{j}$, to the same probability after $j+1$ iterations of the algorithm, $p_{j+1}$.  Our analysis methodology is similar to that  of \cite{PLR14} with the distinction that we now consider a different left-degree distribution.  We first use a  tree-like assumption of the graph (i.e. the graph is free from cycles) and undertake a mean analysis using density evolution.  Then, we relax the tree-like assumption, as well as find concentration bounds to show that the actual performance concentrates around the mean.  Details follow.  First, under the tree-like assumption of the graph, one has
\begin{equation}\label{eq:density}
p_{j+1} = \lambda(1 + \rho_1 - \rho(1 - p_j)).
\end{equation}
Here is a proof of Equation \eqref{eq:density}. A right node (bin) passes a message to a left node (ball) that the left node can be colored if all of its neighbors are colored. This happens with probability 
\begin{align*}
\sum_{i = 2}^\infty \rho_i (1-p_j)^{i-1} = \rho(1-p_j) - \rho_1 
\end{align*}

\begin{figure}
\centering
    \includegraphics[width= 0.6\textwidth]{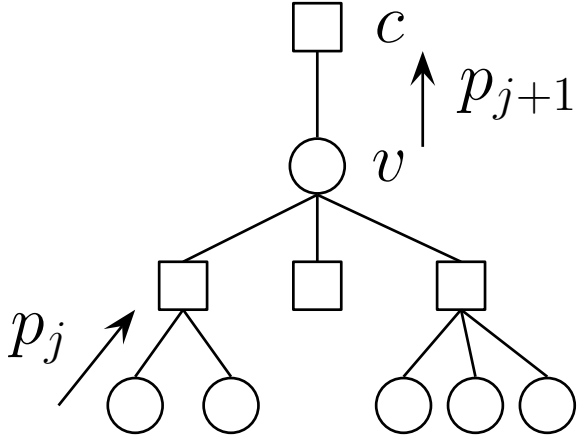}
  \caption{Illustration of message passing algorithm and density evolution equation. \label{fig:tree}} 
\end{figure}

Thus, the probability that ball $v$ passes a message to bin $c$ that it is not colored in iteration $j+1$, is the probability that none of its other neighbor bins can tell $v$ that it is colored in iteration $j$. See Figure \ref{fig:tree} for an illustration. These messages are all independent if the bipartite graph is a tree. Assuming this, one has 
\begin{align}
p_{j+1} &= \sum_{i \geq 1} \lambda_i [1 - (\rho(1-p_j) - \rho_1)]^{i-1} \nonumber \\
&= \lambda(1 + \rho_1 - \rho(1 - p_j)). 
\end{align} 

Recall that our goal is to design a bipartite graph such that all the balls get colored with as few measurements as possible. 
%First note that we can only design the left edge degree distribution since the location of the $K$ non-zero signal components is not known. Therefore, we have no control on the right degree distribution. 
The way that we design the bipartite graph is to connect the left nodes to right nodes uniformly at random according to some distribution. Then, for large $K$ and $M$, the induced right node degree distribution is Poisson with parameter $\eta = \frac{K \bd}{M}$, where $\bd$ is the average degree of left nodes. Thus, 

\begin{align*}
\rho_i = \frac{iM}{K\bd} \PP(\text{random right node has degree}~i) 
 = \frac{i}{\eta} \frac{\eta^i e^{-\eta}}{i!} = \frac{\eta^{i-1}e^{-\eta}}{(i-1)!}.
\end{align*}
Then,
\begin{align} \label{eq:edge}
\rho(x) = \sum_{i \geq 1} \frac{\eta^{i-1}e^{-\eta}}{(i-1)!} x^{i-1} 
 = \sum_{i \geq 0} \frac{\eta^{i}e^{-\eta}}{i!} x^i 
= e^{-\eta} e^{\eta x}  
= e^{-\eta(1-x)}.
\end{align} 

We design the left degree distribution $\lambda(x)$ based on a truncated harmonic distribution as follows. 
%RAMTIN, CAN YOU GIVE SOME MOTIVATION FOR THE HARMONIC DISTRIBUTION?  IT SEEMS TO COME OUT OF THIN AIR
Let $h(x) = \sum_{i=1}^x 1/i$. Then,
\begin{align}
\lambda_i = \frac{1}{i-1} \times \frac{1}{h(D-1)}, ~ 2 \leq i \leq D,
\end{align}
where $D$ is a (large) constant.

We design the number of right nodes to be $M = K/(1-\epsilon)\simeq K(1+\epsilon)$. How to choose constants $D$ and $\epsilon$ will be shortly clarified in Lemma \ref{lem:ir}. The average degree of left nodes is $\bd = \frac{1}{\sum_i \lambda_i/i}$ \cite{Luby2}. To see this, let $E$ be the number of edges of the graph. Then, the number of left nodes of degree $i$ is $E\lambda_i/i$ since $\lambda_i$ is the fraction of edges with degree $i$ on the left. Thus, the number of left nodes is $\sum_i E\lambda_i/i$. So the average left degree is 
$$
\bar{d} = \frac{E}{\sum_i E\lambda_i/i} = \frac{1}{\sum_i \lambda_i/i}.
$$ 
Thus, with our design, 
$$
\bar{d} = (\sum_{i=2}^D \frac{\lambda_i}{i})^{-1} = h(D-1) \frac{D}{D-1}.
$$
Consequently, the Poisson density parameter of the right-node degree distribution is:
$$
\eta = \frac{K\bd}{M} = h(D-1) \frac{D}{D-1} (1-\epsilon).
$$

\begin{figure}
        \centering
        \begin{subfigure}{0.35\textwidth}
                \includegraphics[width=7.5cm,height=5cm]{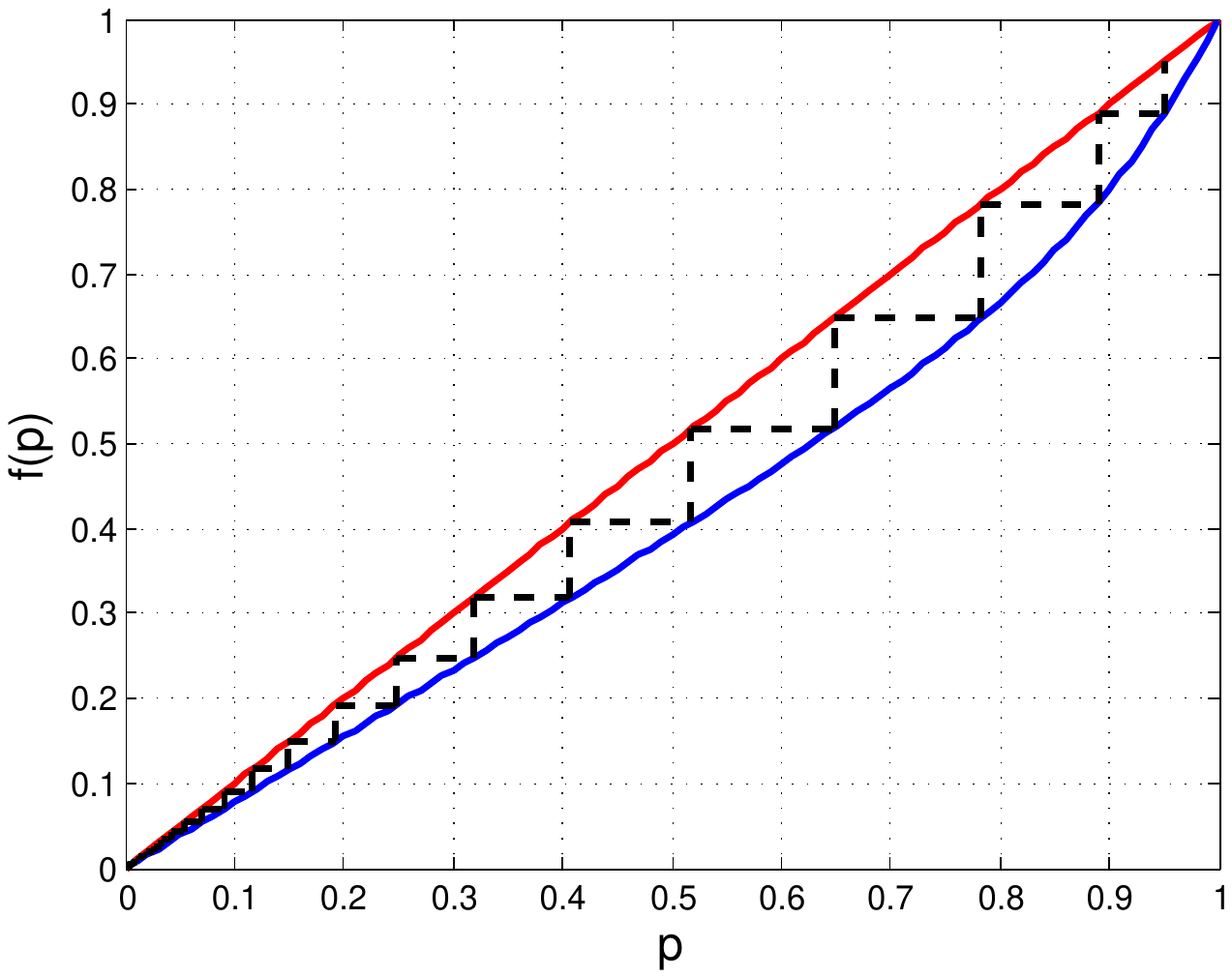}
                \caption{The density evolution curve for parameters $K = 10^5$, $\epsilon=0.3$ and $D = 10^3$.}
                \label{fig:de3}
        \end{subfigure}
        \qquad
        \begin{subfigure}{0.35\textwidth}
                \includegraphics[width=7.5cm,height=5cm]{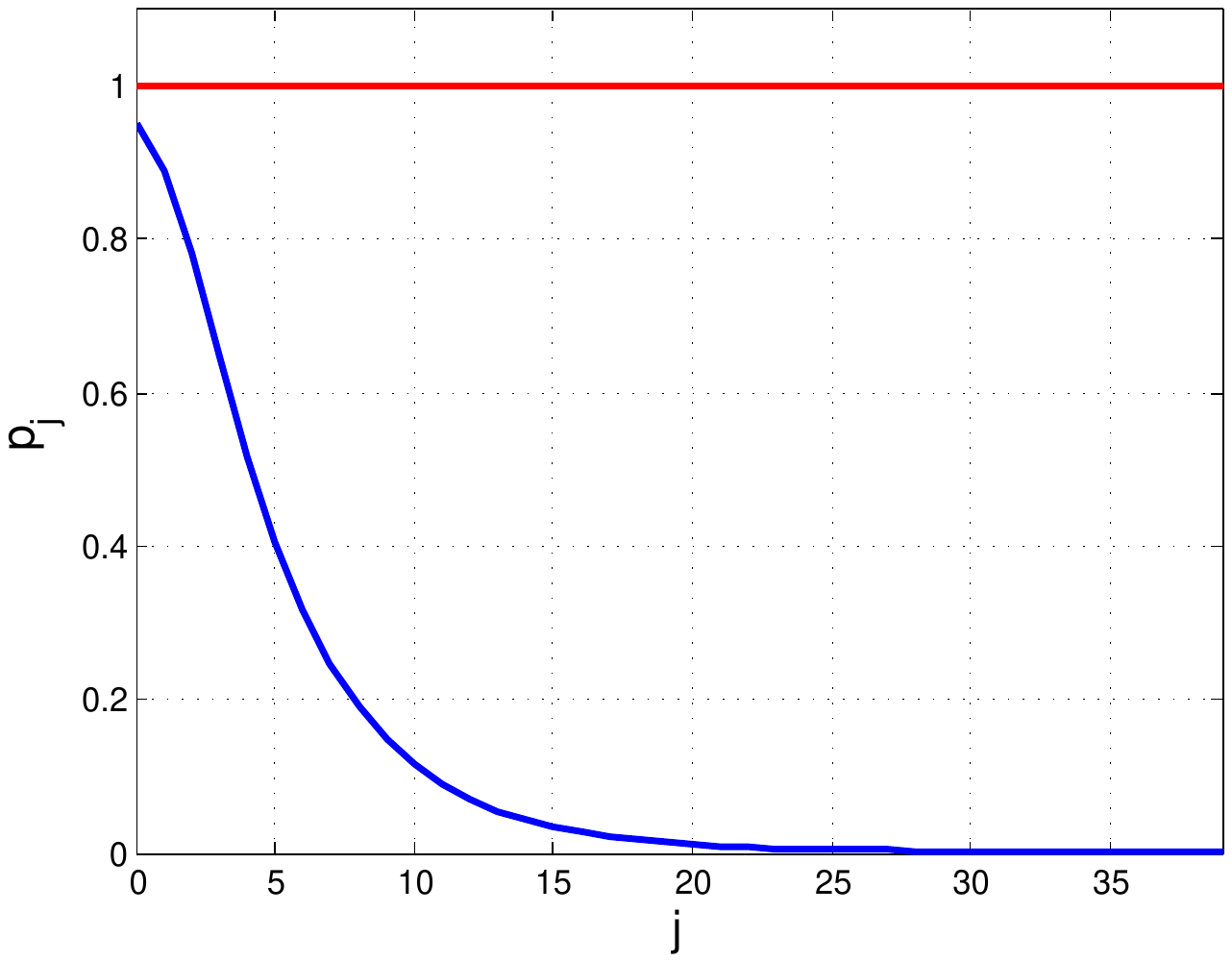}
                \caption{The evolution of $p_j$ after each iteration for parameters $K = 10^5$, $\epsilon=0.3$ and $D = 10^3$.}
                \label{fig:de4}
        \end{subfigure}
        \caption{Figure $(a)$ illustrates the density evolution equation: $p_{j+1} = f(p_j)$. In order to track the evolution of $p_{j}$, pictorially, one draws a vertical line from $(p_j,p_j)$ to $(p_j,f(p_j))$, and then a horizontal line between $(p_j,f(p_j))$ and $(f(p_j),f(p_j))$. Since the two curves meet at $(1,1)$ if $p_0 =1$, then $p_j$ gets stuck at $1$. However, if $p_0 = 1 - \delta$, for some $\delta > 0$, $p_j$ decreases after each iteration, and it gets very close to $0$ in finite number of iterations. Figure $(b)$ illustrates the same phenomenon by showing the evolution of $p_j$ versus the iteration, $j$. Note that in this example, since $\epsilon = 0.3$ and we are operating further away from the capacity, $p_j$ gets very close to $0$ after only $25$ iterations.}\label{fig:density2}
\end{figure}

\begin{lemma}\label{lem:de}
Let $f(x) = \lambda(1 + e^{-\eta} - e^{-\eta x})$. The fixed point equation $x = f(x)$ has exactly two solutions, $x^*_1 = 1$ and $0 < x^*_2 < 1$, in the interval $x \in [0,1]$. Furthermore, if $f'(1) > 1$, then $f(x) < x$ for $x \in (x^*_2,1)$. 
\end{lemma}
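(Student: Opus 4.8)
The plan is to reduce the whole statement to a single convexity fact about $f$ on $[0,1]$, after which the root count and the sign of $f(x)-x$ follow from elementary calculus. First I would record the boundary data. Since $\lambda(1)=\sum_{i=2}^{D}\lambda_i=\frac{1}{h(D-1)}\sum_{k=1}^{D-1}\frac{1}{k}=1$, and the argument $1+e^{-\eta}-e^{-\eta x}$ equals $1$ at $x=1$, we get $f(1)=\lambda(1)=1$, so $x_1^*=1$ is always a fixed point. At $x=0$ the argument is $e^{-\eta}>0$, and since $\lambda$ has nonnegative coefficients and is not identically zero, $f(0)=\lambda(e^{-\eta})>0$; hence $\phi(x):=f(x)-x$ satisfies $\phi(0)>0$ and $\phi(1)=0$. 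Because $\lambda$ and $x\mapsto 1+e^{-\eta}-e^{-\eta x}$ are both increasing, $f$ is increasing on $[0,1]$.

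The heart of the argument is to show that $f$ is strictly convex on $[0,1]$. Writing $g(x)=1+e^{-\eta}-e^{-\eta x}$ and $c=1+e^{-\eta}$, one has the identity $g'(x)=\eta(c-g(x))$, which yields $f''(x)=\eta^{2}(c-g(x))\bigl[\lambda''(g(x))(c-g(x))-\lambda'(g(x))\bigr]$. Since $c-g(x)=e^{-\eta x}>0$, the sign of $f''$ is that of $\psi(y):=\lambda''(y)(c-y)-\lambda'(y)$ at $y=g(x)\in[e^{-\eta},1]$. Substituting $\lambda'(y)=\frac{1}{h(D-1)}\sum_{m=0}^{D-2}y^{m}$ and $\lambda''(y)=\frac{1}{h(D-1)}\sum_{m=1}^{D-2}my^{m-1}$ and collecting powers of $y$, the bracket collapses to
\[
h(D-1)\,\psi(y)=e^{-\eta}\sum_{m=0}^{D-3}(m+1)\,y^{m}-(D-1)\,y^{D-2}.
\]
This polynomial has exactly one sign change in its coefficient sequence (all positive except the leading one), so by Descartes' rule it has at most one positive root; since it is positive at $y=1$ — here I use that the design forces $e^{-\eta}(D-2)/2>1$ for the large constant $D$ — and tends to $-\infty$ as $y\to\infty$, its unique positive root lies in $(1,\infty)$. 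Therefore $\psi>0$ on $[0,1]\supseteq[e^{-\eta},1]$, so $f''>0$ and $f$ is strictly convex on $[0,1]$.

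Finally I would read off the conclusion from convexity. The function $\phi=f-x$ is strictly convex with $\phi(0)>0$ and $\phi(1)=0$, and under the hypothesis $f'(1)>1$ we have $\phi'(1)=f'(1)-1>0$. A strictly convex function has at most two zeros on the real line; since $\phi'(1)>0$, $\phi$ is strictly negative immediately to the left of $1$, and combined with $\phi(0)>0$ the intermediate value theorem produces a zero $x_2^*\in(0,1)$. As $x=1$ is a second zero, these exhaust the at-most-two zeros, giving exactly the two fixed points $0<x_2^*<1$ and $x_1^*=1$. Moreover a strictly convex function is strictly negative strictly between two consecutive zeros, so $\phi(x)<0$, i.e. $f(x)<x$, for all $x\in(x_2^*,1)$, as claimed.

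I expect the main obstacle to be the convexity step: carrying out the $f''$ computation and recognizing that the bracketed quantity telescopes to a polynomial with a single coefficient sign change, so that Descartes' rule pins its only positive root beyond $1$. The endpoint positivity $h(D-1)\,\psi(1)>0$ is exactly where the specific choices of $\eta$, $D$, and $\epsilon$ must be invoked, so I would track how large $D$ has to be relative to $\epsilon$ for the convexity — and hence the clean two-root picture — to hold.
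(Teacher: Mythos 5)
Your proposal is correct in structure and takes a genuinely different --- and more rigorous --- route than the paper's. The paper also starts from $f(1)=1$, $f(0)=\lambda(e^{-\eta})>0$ and the hypothesis $f'(1)>1$, but its key step is to argue that $f'(x)=1$ has a unique solution in $(0,1)$, which it does only informally: after the substitution $y=1+e^{-\eta}-e^{-\eta x}$ it approximates the truncated geometric sum by $1/(1-y)$ and reads off an approximate root (``this is easy to see since $D$ is large''). You instead prove the stronger statement that $f''>0$ on $[0,1]$ by computing $f''$ exactly, collapsing the bracket to the polynomial $e^{-\eta}\sum_{m=0}^{D-3}(m+1)y^{m}-(D-1)y^{D-2}$ (this telescoping is right), and invoking Descartes' rule of signs; strict convexity of $\phi=f-x$ then delivers the at-most-two-zeros count and the negativity of $f(x)-x$ on $(x_2^*,1)$ cleanly. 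What this buys is a fully rigorous argument where the paper has a heuristic; what it costs is the extra hypothesis you flag, namely $e^{-\eta}(D-2)/2>1$, equivalently $e^{\eta}<(D-2)/2$, which is needed for $\psi(1)>0$. Be aware that this is a genuine constraint and not automatic: for the paper's illustrative parameters $D=10^{3}$, $\epsilon=0.1$ one gets $e^{-\eta}(D-2)/2\approx 0.6<1$, so $f$ fails to be convex near $x=1$ there. However, since the harmonic design gives $e^{\eta}=\Theta(D^{1-\epsilon})$, the condition holds once $D^{\epsilon}$ exceeds a small absolute constant (about $3.6$), and in particular it is implied by the choice $D\ge (e/(1-\epsilon))^{2/\epsilon}$ made later in Lemma \ref{lem:ir}. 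You should state this largeness requirement on $D$ explicitly as a hypothesis; the paper's phrase ``$D$ is a (large) constant'' is silently doing the same work in its own sketch.
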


%\textcolor{red}{[Move to proof to appendix?]}
\begin{proof}
First note that it is easy to prove the lemma for specific parameters by plotting the function. See for example Figure \ref{fig:de1}. To formally show it, note that $f(1) = 1$ is one solution of the fixed point equation, since $\lambda(1) = 1$. Also $f(0) = \lambda(e^{-\eta}) > 0$. Thus, by continuity of $f(x)$ and using the assumption that $f'(1) > 1$, there is another fixed point $x^*_2$. Now since $f'(1) > 1$, $f(x) < x$ for $x$ close to 1. In order to show that $f(x) < x$ for all $x \in (x^*_2,1)$, it is enough to show that $f'(x) - 1 = 0 $ has only one solution in $x \in (0,1)$. To this end, see that
$$
f'(x) = \eta e^{-\eta x} \lambda'(1 + e^{-\eta} - e^{-\eta x}).
$$
For ease of notation, let $y = 1 + e^{-\eta} - e^{-\eta x}$ and $y \in (e^{-\eta},1)$. Equivalently, we want to show that 
$$
C(1 + e^{-\eta} - y)(1 + y + y^2 + \ldots + y^{D-2}) =1 
$$
has only one solution where $C = \eta/h(D-1)$. This is easy to see since $D$ is large so $y \simeq  \frac{1-C-Ce^{-\eta}}{1-C}$. 
\end{proof}

\begin{figure}
        \centering
        \begin{subfigure}{0.35\textwidth}
                \includegraphics[width=7.5cm,height=5cm]{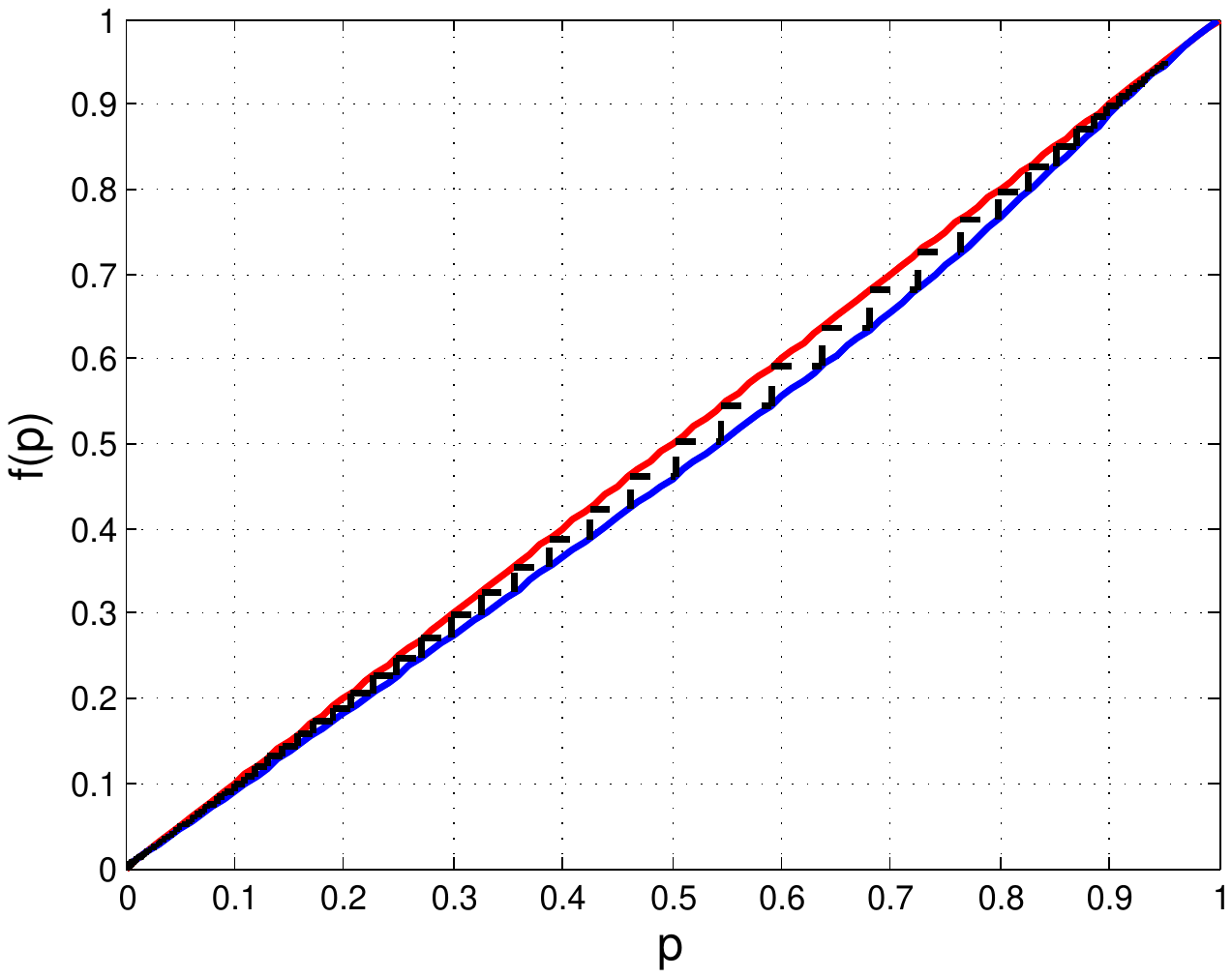}
                \caption{The density evolution curve for parameters $K = 10^5$, $\epsilon=0.1$ and $D = 10^3$.}
                \label{fig:de1}
        \end{subfigure}
        \qquad
        \begin{subfigure}{0.35\textwidth}
                \includegraphics[width=7.5cm,height=5cm]{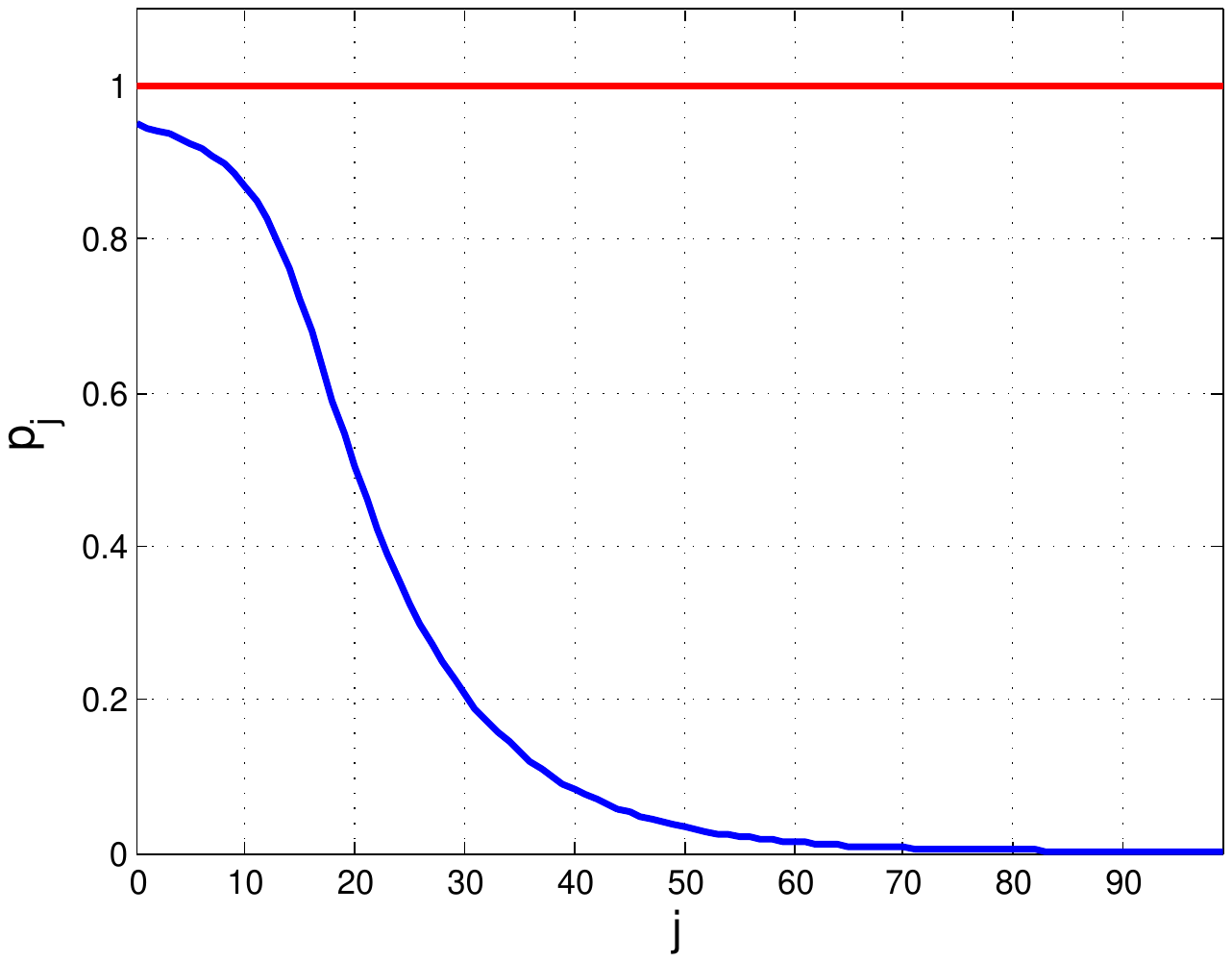}
                \caption{The evolution of $p_j$ after each iteration for parameters $K = 10^5$, $\epsilon=0.1$ and $D = 10^3$.}
                \label{fig:de2}
        \end{subfigure}
        \caption{Figure $(a)$ illustrates the density evolution equation similar to Figure \ref{fig:de1}. Figure $(b)$ illustrates the same phenomenon shows the evolution of $p_j$ versus the iteration, $j$.  Note that in this example, since $\epsilon = 0.1$ and we are operating very close to the capacity, $p_j$ gets very close to $0$ after around $90$ iterations. The reason is that the gap between the two curves in $(a)$ gets smaller as $\epsilon$ gets smaller. }\label{fig:density1}
\end{figure}

As shown in Lemma \ref{lem:de}, given that $f'(1)>1$, the density evolution has a fixed point at $1$. The other fixed point of the equation is approximately $p^* \simeq \lambda(e^{-\eta})$, which corresponds to the error floor of the algorithm. The intuition behind this is as follows. Assuming that one can get away from fixed point $1$ to $1 - \delta$ for some small $\delta > 0$, after a few iterations $p_j$ gets very close to $p^*$ as shown in Figures \ref{fig:density1} and \ref{fig:density2}. However, $p_j$ cannot get smaller than $p^*$, so $p^*$ is the error floor of the algorithm. In the following lemma, we show that for any arbitrarily small numbers $p^*$ and $\epsilon$, there exists a large enough constant $D(p^*,\epsilon)$ such that $f'(1) > 1$. This shows that with only $4M = 4K/(1-\epsilon) \simeq  4K(1+\epsilon)$ measurements, irregular PhaseCode algorithm can recover almost all the non-zero signal components. So given that the coloring procedure starts (the density evolution equation can be started from $1- \delta$), irregular PhaseCode is capacity-approaching. 

\begin{lemma}\label{lem:ir}
For any $p^* > 0$ and any $\epsilon > 0$, there exists a large enough constant $D(\epsilon,p^*)$ such that $M = K(1-\epsilon)^{-1}\simeq K(1+\epsilon)$ is the number of right nodes (bins), and $p_j$ converges to $p^*$ as $j$ goes to infinity, given that $p_1 = 1 - \delta$.
\end{lemma}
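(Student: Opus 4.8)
The plan is to reduce the statement to two tunable facts about the density-evolution map $f(x) = \lambda(1 + e^{-\eta} - e^{-\eta x})$ and then close with a monotone-iteration argument. Concretely, I want to show that, by taking $D$ large, I can simultaneously (i) force the slope $f'(1)$ above $1$, which by Lemma~\ref{lem:de} gives $f(x) < x$ on all of $(x_2^*,1)$, and (ii) push the nontrivial fixed point $x_2^*$ (the error floor) below the prescribed $p^*$. With the choice $M = K/(1-\epsilon)$ the right-node Poisson parameter is exactly $\eta = h(D-1)\tfrac{D}{D-1}(1-\epsilon)$, so both quantities are explicit functions of the single free constant $D$.

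First I would compute $f'(1)$. Since the inner argument equals $1$ when $x=1$, the chain rule gives $f'(1) = \eta e^{-\eta}\lambda'(1)$. For the truncated harmonic law one has $\lambda'(1) = \sum_{i=2}^D (i-1)\lambda_i = (D-1)/h(D-1)$, and substituting $\eta$ collapses the product to the clean form $f'(1) = D(1-\epsilon)e^{-\eta}$. Using $h(D-1) = \ln D + O(1)$ yields $e^{-\eta} = \Theta\bigl(D^{-(1-\epsilon)}\bigr)$, whence $f'(1) = \Theta\bigl((1-\epsilon)D^{\epsilon}\bigr) \to \infty$. Thus for every $\epsilon>0$ there is a threshold beyond which $f'(1)>1$ and Lemma~\ref{lem:de} is in force.

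Second I would bound the error floor. Writing $\lambda(x) = h(D-1)^{-1}\sum_{k=1}^{D-1} x^k/k \le -\ln(1-x)/h(D-1)$, I would work from the fixed-point identity $x_2^* = \lambda(1 + e^{-\eta} - e^{-\eta x_2^*})$ directly rather than from the heuristic $x_2^* \simeq \lambda(e^{-\eta})$. A small-$x_2^*$ self-consistency check shows $\eta x_2^* \to 0$, so the argument linearizes to $e^{-\eta} + \eta x_2^*$, giving $x_2^*\bigl(h(D-1)-\eta\bigr) \approx e^{-\eta}$ and hence $x_2^* = \Theta\bigl(D^{-(1-\epsilon)}/(\epsilon\ln D)\bigr) \to 0$. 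Choosing $D(\epsilon,p^*)$ large enough to make this bound smaller than $p^*$, and also large enough for $f'(1)>1$, fixes the constant in the statement.

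Finally I would run the monotone convergence argument. The map $f$ is increasing on $[0,1]$ (nonnegative coefficients, and $x \mapsto 1+e^{-\eta}-e^{-\eta x}$ is increasing), so for $x \in (x_2^*,1)$ Lemma~\ref{lem:de} gives $f(x)<x$ while monotonicity gives $f(x)>f(x_2^*)=x_2^*$. Starting from $p_1 = 1-\delta \in (x_2^*,1)$, induction shows $\{p_j\}$ is strictly decreasing and bounded below by $x_2^*$; continuity forces the limit to be a fixed point in $[x_2^*,1)$, which can only be $x_2^*$, and $x_2^* \le p^*$ completes the claim. I expect the main obstacle to be exactly the second step: the remark only asserts $x_2^* \simeq \lambda(e^{-\eta})$ heuristically, so the genuine work is to replace that approximation by honest two-sided bounds on the true fixed point—sandwiching it between explicit functions of $D$—so as to certify $x_2^* \le p^*$ rather than merely $\lambda(e^{-\eta}) \le p^*$.
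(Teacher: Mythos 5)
Your proposal is correct and follows the same skeleton as the paper's proof: both reduce the lemma to (i) the stability condition $f'(1)>1$ and (ii) an upper bound of $p^*$ on the error floor, and both rest on the identical computation $f'(1)=\eta e^{-\eta}\lambda'(1)=\eta e^{-\eta}\frac{D-1}{h(D-1)}=D(1-\epsilon)e^{-\eta}$ together with $e^{-\eta}=\Theta\bigl(D^{-(1-\epsilon)}\bigr)$, so that $f'(1)=\Theta\bigl(D^{\epsilon}\bigr)\to\infty$. The differences are in degree of rigor rather than strategy. The paper exhibits an explicit threshold $D=\max\{(e/(1-\epsilon))^{2/\epsilon},(1+1/p^*)^{1/(1-\epsilon)}\}$ and checks both conditions by elementary inequalities, whereas you argue asymptotically; either is acceptable. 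More substantively, for the error floor the paper bounds $\sum_i\lambda_i e^{-\eta(i-1)}=\lambda(e^{-\eta})\le p^*$ and simply identifies this quantity with the error floor. Since $\lambda$ is increasing and $e^{-\eta x_2^*}\le 1$, one actually has $x_2^*=\lambda(1+e^{-\eta}-e^{-\eta x_2^*})\ge\lambda(e^{-\eta})$, so the paper is bounding a \emph{lower} bound on the true fixed point, and the passage from $\lambda(e^{-\eta})\le p^*$ to $x_2^*\le p^*$ is left heuristic. Your plan to bound $x_2^*$ itself is exactly the right repair: the linearization $x_2^*\bigl(h(D-1)-\eta\bigr)\approx e^{-\eta}$ can be made honest by combining $e^{-\eta x}\ge 1-\eta x$ with $\lambda(u)\le\bigl(u+u^2/(1-u)\bigr)/h(D-1)$, which yields $f(x_0)<x_0$ at $x_0=\Theta\bigl(e^{-\eta}/(\epsilon\,h(D-1))\bigr)$ and hence $x_2^*\le x_0\to 0$; do carry this out, as it is the one genuinely nontrivial step. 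Your concluding monotone-iteration argument is also more precise than the paper's closing sentence (a decreasing sequence bounded below by $p^*$ need not converge \emph{to} $p^*$); the limit is the fixed point $x_2^*\le p^*$, which is what your argument delivers.
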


\begin{proof}
We show that if 
\begin{align} \label{eq:D}
D=\max\{(\frac{e}{1-\epsilon})^{2/\epsilon},(1+\frac{1}{p^*})^{1/(1-\epsilon)}\},
\end{align}
then, 
\begin{equation}\label{eq:stability}
f'(1)  = \eta e^{-\eta} \sum_{i \geq 1} \lambda_i (i-1) > 1,
\end{equation} 
and the error floor which is approximately $\lambda(e^{-\eta})$ is at most $p^*$; that is,
\begin{equation}\label{eq:error}
\sum_{i \geq 1} \lambda_i e^{-\eta(i-1)} \leq p^*.
\end{equation} 
This shows that in the density evolution equation, $p_j$ converges to $p^*$ as $j$ goes to infinity. This is illustrated in Figure \ref{fig:density2}. 

Recall that 
$$
\bar{d} = (\sum_{i=2}^D \frac{\lambda_i}{i})^{-1} = h(D-1) \frac{D}{D-1}.
$$
Thus, since $M = K/(1-\epsilon)$,
$$
\eta = \frac{K \bar{d}}{M} = h(D-1) \frac{D}{D-1} (1-\epsilon). 
$$ 
First, we show that \eqref{eq:error} in the following.

\begin{align*}
\sum_{i=2}^D \lambda_i e^{-\eta(i-1)} &= \frac{1}{h(D-1)} \sum_{i=2}^D \frac{1}{i-1}e^{-\eta(i-1)} \\
& \leq \frac{1}{h(D-1)} \sum_{i=1}^\infty e^{-\eta i} \\
& = \frac{e^{-\eta}}{h(D-1)(1-e^{-\eta})}. 
\end{align*}
It is enough to show that $h(D-1)(e^{\eta} - 1) \geq \frac{1}{p^*}$. We have
\begin{align*}
h(D-1)(e^{\eta} - 1) &\geq e^{\eta} - 1 \\
& \geq e^{\log(D).\frac{D}{D-1}(1-\epsilon)} -1 \\
& \geq D^{1- \epsilon} - 1 \\
& \geq \frac{1}{p^*},
\end{align*}
where the last inequality is due to \eqref{eq:D}.

Second, we show that \eqref{eq:stability} is satisfied in the following.

\begin{align}
\eta e^{-\eta} \sum_{i=2}^D \mu_i (i-1) &= \eta e^{-\eta} \frac{D-1}{h(D-1)} \\
&= D(1-\epsilon)e^{-h(D-1)\frac{D}{D-1}(1-\epsilon)} \\
& \geq D(1-\epsilon)e^{-(1+\log(D))\frac{D}{D-1}(1-\epsilon)} \\
& = \frac{1-\epsilon}{e} D^{\frac{\epsilon D - 1}{D-1}} \\ \label{eq:D1}
& \geq \frac{1-\epsilon}{e} D^{\epsilon/2} \\ \label{eq:D2}
& \geq 1,
\end{align}
where \eqref{eq:D1} is due to \eqref{eq:D} since $D \geq (\frac{e}{1-\epsilon})^{2/\epsilon} \geq \frac{2}{\epsilon}$ implies that $\frac{\epsilon D - 1}{D-1} \geq \frac{\epsilon}{2}$, and \eqref{eq:D2} is due to \eqref{eq:D}. This shows that $p_j, ~j\geq 1$ is a strictly decreasing sequence which is lower bounded by $p^*$. Thus, $p_j \to p^*$ as $j \to \infty$. This completes the proof.
\end{proof}

\begin{corollary}\label{cor:de}
Given that $p_1 = 1 - \delta$, for any $\epsilon_1 > 0$, there exists a constant $\ell(\epsilon_1)$ such that $p_\ell \leq p^* + \epsilon_1$. 
\end{corollary}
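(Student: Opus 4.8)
The plan is to leverage the monotone-convergence result already established in Lemma~\ref{lem:ir} and to upgrade it from an asymptotic statement into a finite-iteration bound, while verifying that the iteration count produced depends only on $\epsilon_1$ and on the fixed constants $\epsilon, p^*, D, \delta$, and \emph{not} on $K$. This independence from $K$ is the real point of the corollary: together with the fact that each iteration touches $\mathcal{O}(K)$ nodes, it is what pins the overall decoding complexity at $\mathcal{O}(K)$.

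First I would recall the two structural facts coming out of Lemma~\ref{lem:ir}. With $D$ chosen as in \eqref{eq:D}, the sequence $\{p_j\}$ generated by $p_{j+1} = f(p_j)$, where $f(x) = \lambda(1 + e^{-\eta} - e^{-\eta x})$ as in Lemma~\ref{lem:de} and $p_1 = 1-\delta$, is strictly decreasing and bounded below; by Lemma~\ref{lem:de} its only admissible limit among fixed points of $f$ in $[0,1)$ is the second fixed point $x^*_2$, and \eqref{eq:error} gives $x^*_2 = \lambda(e^{-\eta}) \le p^*$. Crucially, $f$ — and hence the entire trajectory $\{p_j\}$ — depends on $K$ only through $\eta = h(D-1)\frac{D}{D-1}(1-\epsilon)$, a function of $\epsilon$ and $D$ alone. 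Consequently any iteration index I exhibit is automatically a constant in $K$.

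Next I would convert convergence into a quantitative step count via a uniform-gap argument. Define $g(x) = x - f(x)$. Since $x^*_2 \le p^* < p^* + \epsilon_1$, the interval $I = [p^* + \epsilon_1,\, 1-\delta]$ — which we may assume nonempty, for otherwise $p_1$ already satisfies the claim and $\ell = 1$ works — lies strictly inside $(x^*_2, 1)$, where Lemma~\ref{lem:de} guarantees $f(x) < x$. Thus $g$ is continuous and strictly positive on the compact set $I$ and attains a minimum $\gamma = \min_{x \in I} g(x) > 0$. As long as $p_j \in I$ we have $p_{j+1} = p_j - g(p_j) \le p_j - \gamma$, so each iteration reduces $p_j$ by at least $\gamma$; since $p_j$ cannot fall below $x^*_2 \le p^*$, after at most $\ell = \lceil (1-\delta)/\gamma \rceil$ iterations the trajectory must exit $I$ through its lower endpoint, giving $p_\ell < p^* + \epsilon_1$. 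As $\gamma$, and hence $\ell = \ell(\epsilon_1)$, depends only on $\epsilon_1$ and the fixed constants, this completes the argument.

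Since the statement is essentially a corollary of monotone convergence, I do not anticipate a genuine obstacle. The only point demanding care is the uniform positivity of $\gamma$ on $I$, which rests on placing the lower endpoint $p^* + \epsilon_1$ \emph{strictly} above the fixed point $x^*_2$ (valid because $x^*_2 \le p^*$) and then invoking compactness; this is exactly where the argument would break if one tried to drive the bound all the way down to $x^*_2$ itself, where the gap $g$ vanishes and no uniform per-step decrement exists.
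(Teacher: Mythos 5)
Your proof is correct and follows essentially the same route as the paper's: both arguments bound the per-iteration decrement from below by $\gamma = \min_{p \in [p^* + \epsilon_1,\,1-\delta]} \bigl(p - f(p)\bigr) > 0$ and conclude that at most $\mathcal{O}(1/\gamma)$ iterations are needed. Your version is somewhat more careful in justifying why $\gamma > 0$ (via Lemma \ref{lem:de} and compactness) and in noting the independence from $K$, but the underlying idea is identical.
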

\begin{proof}
It is sufficient to show that after each iteration, $p_j$ decreases by a constant amount, that is a function of $\epsilon_1$. Let $q(\epsilon_1) = \arg \min_{p \in [p^* + \epsilon_1,1 - \delta]} p - f(p)$. Let $\gamma(\epsilon)= p_m - f(p_m)$, which is a strictly positive constant. Then, 
$$p_{j+1} - p_j = f(p_j) - p_j \leq - \gamma.$$ 
Therefore, it takes at most $\ell(\epsilon_1) = \frac{1 - p^*}{\gamma}<\frac{1}{\gamma}$ iterations  so that $ p_\ell \leq p^* + \epsilon_1$. 
\end{proof}

In the density evolution analysis so far, we have shown that the \emph{average} fraction of balls that remain uncolored will be arbitrarily close to the error floor after a fixed number of iterations, provided that the tree-like assumption is valid. It remains to show that the actual fraction of balls that are not in the giant component after $\ell$ iterations is highly concentrated around $p_\ell$. Since the maximum degree of left nodes are again a constant $D$, the exact procedure used in \cite{PLR14} to get a concentration bound can be also applied here.  Towards this end, first we show that a neighborhood of depth $\ell$ of a typical edge is a tree with high probability for a constant $\ell$ in Lemma \ref{lem:tree}.\footnote{The depth-$\ell$ neighborhood of edge $(v,c)$ is the subgraph of all the edges and nodes of paths having length less than or equal to $\ell$, that start from $v$ and the first edge of the path is not $(v, c)$.} Second, in Lemma \ref{lem:concentration}, we use the standard Doob's martingale argument \cite{RU01}, to show that the number of uncolored balls after $\ell$ iterations of the algorithm is highly concentrated around $K p_{\ell}$. We refer the readers to \cite{PLR14} for the proofs. 

Consider a directed edge from $\vec{e} = (v,c)$ from a left-node (ball) $v$ to a right-node (bin) $c$. Define the directed neighborhood of depth $\ell$ of $(\vec{e})$ as $\mathcal{N}_{\vec{e}}^{\ell}$, that is the subgraph of all the edges and nodes on paths having length less than or equal to $\ell$, that start from $v$ and the first edge of the path is not $\vec{e}$.

\begin{lemma}\label{lem:tree}
For a fixed $\ell^*$, $\mathcal{N}_{\vec{e}}^{2\ell^*}$  is a tree-like neighborhood with probability at least $1 - \mathcal{O}(\log(K)^{\ell^*}/K)$.
\end{lemma}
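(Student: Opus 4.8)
The plan is to run a breadth-first edge-revealing exploration of $\mathcal{N}_{\vec{e}}^{2\ell^*}$ and to argue that it is tree-like unless it contains a cycle, an event I would control by a first-moment (Markov) argument. The one feature that separates this from the textbook bounded-degree calculation of \cite{RU01} is that, while every left node has degree at most the constant $D$, the right-node degrees are Poisson with the constant mean $\eta$ and hence are not uniformly bounded. So the first step is to dispose of the largest right degree: since there are $M = K/(1-\epsilon) = \Theta(K)$ right nodes, a Poisson tail estimate together with a union bound over all right nodes shows that the maximum right degree is $\mathcal{O}(\log K)$ except on an event $\mathcal{E}$ of probability $o(1/K)$. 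I would condition on $\mathcal{E}^c$ for the rest of the argument, absorbing $\PP(\mathcal{E}) = o(1/K)$ into the final bound.

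Conditioned on $\mathcal{E}^c$, the second step bounds the size of the neighborhood. Starting from $v$ and alternating sides, each left-to-right step branches by at most $D$ and each right-to-left step by at most $\mathcal{O}(\log K)$; since depth $2\ell^*$ amounts to $\ell^*$ such round trips and $D$ and $\ell^*$ are constants, the total number of vertices and edges in $\mathcal{N}_{\vec{e}}^{2\ell^*}$ is $N = \mathcal{O}(\log(K)^{\ell^*})$.

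The third step is the cycle bound, and this is where I would be careful to recover the exponent $\ell^*$ rather than a larger power. The neighborhood fails to be tree-like exactly when it contains a cycle, necessarily of length at most $4\ell^*$ and lying within distance $2\ell^*$ of $v$. I would bound the expected number of such cycles as the product of (i) the expected number of cycles of bounded length anywhere in the graph, which is $\mathcal{O}(1)$ because the relevant branching factors are the \emph{average} excess degrees $\lambda'(1) = (D-1)/h(D-1)$ and $\rho'(1) = \eta$, both constants in $K$, and (ii) the probability that a given short cycle falls within the depth-$2\ell^*$ neighborhood of the fixed node $v$, which is $\mathcal{O}(N/K) = \mathcal{O}(\log(K)^{\ell^*}/K)$. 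Markov's inequality then gives $\PP(\mathcal{N}_{\vec{e}}^{2\ell^*}\text{ not tree-like}\mid \mathcal{E}^c) = \mathcal{O}(\log(K)^{\ell^*}/K)$, and adding $\PP(\mathcal{E})$ yields the claim.

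The main obstacle is precisely this interplay between the unbounded right degrees and the target exponent. A naive edge-revealing union bound---each of the $\mathcal{O}(N)$ revealed edges closes a cycle with probability $\mathcal{O}(N/K)$---would only give $\mathcal{O}(N^2/K) = \mathcal{O}(\log(K)^{2\ell^*}/K)$; the saving of a full factor $\log(K)^{\ell^*}$ comes from noticing that cycles are governed by the \emph{average} excess degrees (constants), not by the $\mathcal{O}(\log K)$ maximum degree that merely controls the neighborhood size. The remaining delicate points are making the Poisson tail sharp enough that the union bound over $\Theta(K)$ right nodes still costs only $o(1/K)$, and checking that conditioning on $\mathcal{E}^c$ does not disturb the near-uniformity of edge endpoints used in the cycle count; both are handled exactly as in \cite{PLR14}.
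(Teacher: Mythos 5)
The paper does not actually prove this lemma: it explicitly defers both Lemma \ref{lem:tree} and Lemma \ref{lem:concentration} to \cite{PLR14}, so there is no in-paper argument to compare against line by line. Your proof is the standard one for such statements (breadth-first edge revealing, truncation of the Poisson right degrees at $\mathcal{O}(\log K)$ via a tail bound plus a union bound, and a first-moment count of short cycles), and it is sound: in particular, the observation that a non-tree-like depth-$2\ell^*$ neighborhood must contain a cycle of length at most $4\ell^*$ (from two distinct BFS paths to a collision vertex) is correct, and your splitting of the cycle expectation into an $\mathcal{O}(1)$ global count governed by the \emph{average} excess degrees $\lambda'(1)$ and $\rho'(1)$ times an $\mathcal{O}(N/K)$ localization probability is exactly what is needed to land on the exponent $\ell^*$ rather than the $2\ell^*$ that the naive union bound over revealed edges would give. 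The only step I would ask you to write out carefully is the factorization in part (iii): the event that a cycle $C$ lies inside $\mathcal{N}_{\vec{e}}^{2\ell^*}$ is not independent of $C$ being present, so you should bound $\sum_C \PP(C \text{ present})\sup_C \PP(C \subseteq \mathcal{N}_{\vec{e}}^{2\ell^*} \mid C \text{ present})$ and verify the conditional localization bound (e.g.\ by bounding the expected number of paths of length at most $2\ell^*$ from $v$ to $C$), together with the effect of conditioning on the max-degree event; you flag both issues yourself, and they go through as in \cite{PLR14}.
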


\begin{lemma}\label{lem:concentration}
Let $Z$ be the number of uncolored edges\footnote{An edge is colored if its corresponding ball is colored.} after $\ell$ iterations of the PhaseCode algorithm. Then, for any $\epsilon_2 > 0$, there exists a large enough $K$ and a constant $\beta$ such that 
\begin{align}\label{eq:ctcf}
|\mathbb{E}[Z] - K \bd p_\ell|& <  K \bd \epsilon_2 /2\\ \label{eq:mg}
\mathbb{P}(|Z - K\bd p_\ell| &> K\bd \epsilon) < 2e^{-\beta \epsilon_2^2 K^{1/(4\ell + 1)}},
\end{align} 
where $p_\ell$ is derived from the density evolution equation \eqref{eq:density}.
\end{lemma}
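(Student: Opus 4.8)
The plan is to establish the two displayed inequalities separately: first the expectation bound \eqref{eq:ctcf} using the tree-neighborhood estimate of Lemma \ref{lem:tree}, and then the concentration bound \eqref{eq:mg} via a Doob martingale together with Azuma--Hoeffding, exploiting the locality of the coloring process. For \eqref{eq:ctcf}, I would start from linearity of expectation over the $K\bd$ edges, writing $\EE[Z] = \sum_{\vec{e}} \PP(\vec{e} \text{ uncolored after } \ell \text{ iterations})$. For a fixed edge $\vec{e}=(v,c)$, I condition on whether $\mathcal{N}_{\vec{e}}^{2\ell}$ is tree-like. On the tree-like event the colored/uncolored status of $v$ is governed exactly by the independent recursion analyzed in the derivation of \eqref{eq:density}, so the conditional probability of remaining uncolored is precisely $p_\ell$; on the complementary event, which by Lemma \ref{lem:tree} has probability $\mathcal{O}(\log(K)^{\ell}/K)$, I bound the probability crudely by $1$. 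Hence $|\PP(\vec{e}\text{ uncolored}) - p_\ell| = \mathcal{O}(\log(K)^\ell/K)$, and summing gives $|\EE[Z] - K\bd p_\ell| \le K\bd\cdot\mathcal{O}(\log(K)^\ell/K)$, which falls below $K\bd\epsilon_2/2$ once $K$ is large enough for fixed $\ell$.

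For \eqref{eq:mg}, I would expose the graph one edge at a time in a fixed order, producing a Doob martingale $Z_0 = \EE[Z], Z_1, \dots, Z_E = Z$ with $E = K\bd$, where $Z_i = \EE[Z \mid \text{first } i \text{ edges revealed}]$. The crucial structural fact is that whether a ball $v$ is colored after $\ell$ iterations depends only on the nodes and edges in its depth-$2\ell$ directed neighborhood; therefore altering a single edge can change the status of only those balls lying within distance $2\ell$ of that edge. This makes each martingale increment bounded, $|Z_{i+1}-Z_i| \le \alpha$, where $\alpha$ counts the number of edges inside a depth-$2\ell$ neighborhood. Azuma--Hoeffding then yields $\PP(|Z - \EE[Z]| > t) \le 2\exp\bigl(-t^2/(2E\alpha^2)\bigr)$, and combining this with \eqref{eq:ctcf} through the triangle inequality converts a deviation from $\EE[Z]$ into the desired deviation from $K\bd p_\ell$.

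The main obstacle, and the reason the exponent is the unusual $K^{1/(4\ell+1)}$ rather than linear in $K$, is bounding $\alpha$. Left-degrees are at most the constant $D$, but the right-degrees are Poisson and hence unbounded, so the neighborhood size is not a constant. I would handle this by truncation: condition on the high-probability event that every right node has degree at most a slowly-growing threshold $d_{\max}$, under which $\alpha = \mathcal{O}\bigl((D\,d_{\max})^{2\ell}\bigr)$. Taking $t = \Theta(K)$ and $E = \Theta(K)$, the Azuma exponent behaves like $t^2/(E\alpha^2) \sim K/(D\,d_{\max})^{4\ell}$, while the truncation event fails with a probability controlled by $d_{\max}$. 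Optimizing $d_{\max}$ to balance the martingale exponent against the truncation-failure probability is precisely what produces the $K^{1/(4\ell+1)}$ rate: the $4\ell$ arises from squaring the depth-$2\ell$ neighborhood size, and the extra $+1$ from trading this growth off against the linear number of exposure steps. This truncation-and-optimization step is the delicate part of the argument; since the maximum left-degree here is again the constant $D$, the identical procedure of \cite{PLR14} applies verbatim, which is why the statement is quoted directly and its proof deferred to that reference.
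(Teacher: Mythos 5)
Your proposal follows essentially the same route the paper indicates: the expectation bound via the tree-like neighborhood estimate of Lemma \ref{lem:tree}, and the deviation bound via an edge-exposure Doob martingale with Azuma--Hoeffding in the style of \cite{RU01}, with the proof details deferred to \cite{PLR14}. Your reconstruction of the omitted details --- in particular tracing the $K^{1/(4\ell+1)}$ exponent to a truncation of the unbounded Poisson right-degrees balanced against the martingale exponent --- is consistent with that argument and correctly identifies why the rate is sublinear in $K$ rather than linear.
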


\subsection{Trigonometric-Based Modulation }\label{sec:trig}
In this section, we will explain the choice of the trig-modulation $T$, and how $T$ enables us to detect the non-zero signal components and recover them in magnitude and phase (up to a global phase). We will provide a brief explanation here, while referring the readers to \cite{PLR14} for thorough analysis. 

Define the length $4$ vector $y_i$ to be the measurement vector corresponding to the $i$-th row of matrix $H$ for $1 \leq i \leq M$. Then $y = [y_1^T, y_2^T, \ldots, y_M^T]^T$, where $y_i = [y_{i,1},y_{i,2},y_{i,3},y_{i,4}]^T$. Let $\om = \frac{\pi}{2n}$. Recall from  Equation \eqref{eq:T} that we design the measurement matrix $T = [t_{j\ell}]$ as follows. For all $\ell, ~ 1 \leq \ell \leq n$,
\begin{align*}
t_{1\ell} &= e^{\bi \om \ell}, \\
t_{2\ell} &= e^{-\bi \om \ell},  \\
t_{3\ell} &= 2 \cos(\om \ell), \\
t_{4\ell} &= e^{\bi \om' \ell}.
\end{align*}

This measurement matrix enables the algorithm to detect a left node (bin) that is connected to some colored balls and only one uncolored ball. Furthermore, the measurements enable the algorithm to find the index of uncolored ball, and recover the corresponding non-zero signal component in magnitude and phase relative to the known (colored) components. 

Consider a bin, let's say bin $i$, for which we know that it is connected to some colored balls. We want to check if bin $i$ is connected to exactly one other uncolored ball; i.e. one unknown non-zero component of $x$, say $x_\ell$, as in Figure \ref{fig:multiton}. We now describe a guess and check strategy to check our guess, and to find $\ell$ and $x_\ell$ if the guess is correct. If our guess is correct, we have access to measurements of the form:
\begin{align}\label{eq1}
y_{i,1} &= |a + e^{\bi \om \ell} x_\ell| ,\\ \label{eq2}
y_{i,2} &= |b + e^{-\bi \om \ell} x_\ell| , \\ \label{eq3}
y_{i,3} &= |c + 2\cos(\om \ell) x_\ell| , \\ \label{eq4}
y_{i,4} &= |d + e^{\bi \om' \ell} x_\ell|,
\end{align}
where complex numbers $a$, $b$, $c$ and $d$ are known values that depend on the values and locations of the known colored balls. We want to solve the first $3$ equations \eqref{eq1}-\eqref{eq3} to find $\ell$ and $x_\ell$, and use \eqref{eq4} to check if our guess is correct. We skip the algebra here, and refer the readers to \cite{PLR14} for details of how this can be done. We emphasize that one can solve the first 3 equations to find (at most) 4 possible solutions for $x_\ell$ and $\ell$ in closed-form. Whether one of those possible solutions is admissible or not can be checked using the $4$-th equation \eqref{eq4}. Since $\om '$ is a random phase, the probability that the check equation declares admissible solution while the solution is not valid is 0. 

\subsection{How to initialize the ball-coloring algorithm?}\label{sec:initial}
We have so far assumed that an arbitrarily small fraction $\delta > 0$ of the balls are colored in the initial phase. That is, a positive fraction of non-zero signal components are detected. In this subsection, we propose two methods for initializing the coloring algorithm. The first method is based on having an \emph{active} sensing stage. In active sensing, the measurements can be adaptive and are functions of the previous observations.  The second method assumes that the {\em locations}\footnote{Note that we assume only {\em location} knowledge of a tiny fraction of the active signal components, but no knowledge about
the magnitude or phase of these components.} of a small fraction of non-zero components of the signal are known. Note that this is the case for most practical scenarios. For example, almost all sparse images in the Fourier domain have non-zero low band components. In both methods, we make essential use of the deterministic measurements for non-sparse phase retrieval introduced in \cite{PLR14}. 

In the first method, as mentioned, we first try to find the location of a positive fraction of non-zero components. Consider the following $3 \epsilon_2 K$ measurements\footnote{Without loss of generality suppose that $\epsilon_2 K$ is an integer.} for some small $\epsilon_2 > 0$:
$$
\ty = |T_1 \otimes H_1 x|,
$$  
where 
\begin{equation}\label{eq:T1}
T_1 = \left( \begin{array}{cccc}
1 & 1 & \ldots & 1 \\
e^{\bi \om} & e^{\bi 2 \om} & \ldots & e^{\bi n \om} \\
\cos(\om) & \cos(2\om) & \ldots & \cos(n \om) 
\end{array}
\right) \in \mathbb{C}^{3 \times n},
\end{equation}
and $H_1 \in \{0,1\}^{\epsilon_2 K \times n}$ is a code matrix, and is constructed as follows. Each column of $H_1$ has exactly one entry that is equal to $1$, and the location of this entry is random. Equivalently, we consider a random bipartite graph model in which left nodes (which refer to signal components) are columns of $H_1$ and right nodes (which refer to measurements) are rows of $H_1$. Let $\ty = [\ty_1^T, \ty_2^T, \ldots, \ty_{M_1}^T]^T$, where $\ty_i = [\ty_{i,1},\ty_{i,2},\ty_{i,3}]^T$. Define a \emph{singleton} right node to be a right node that is connected to exactly 1 left nodes with non-zero signal component.  Now we show that if a right node is a singleton, the location and magnitude of the non-zero component can be found using ratio test. To this end, suppose that
\begin{align*}
\ty_{i,1} &= |x_\ell| \\
\ty_{i,2} &=  |e^{\bi \ell \om} x_\ell | \\
\ty_{i,3} &= |\cos(\ell \om) x_{\ell}|.
\end{align*} 
The fact that $\frac{\ty_{i,1}}{\ty_{i,2}} = 1$ reveals that the $i$-th right node is a singleton. The magnitude of the non-zero component is $\ty_{i,1}$. Furthermore, the location of the non-zero component can be found using $\ell = \cos^{-1}(\ty_{i,3}/x_{\ell})$. It is easy to see that the number of active components that can be found in location and magnitude approaches 
$$
K \Pr(\text{left node is connected to singleton} = K(1-\frac{1}{\epsilon_2 K})^K = Ke^{-1/\epsilon_2} = K\delta,
$$ 
where $\delta = e^{-1/\epsilon_2} > 0$ is a constant. In order to recover the phases of these non-zero components, we use deterministic measurements as follows. Let $\delta K = K_1$. Without loss of generality and for ease of notation assume that the detected non-zero components are $x_1,x_2,\ldots,x_{K_1}$. We use $2K_1 -2 $ measurements to get $|x_{1} + x_{\ell}|$ and $|x_{1} + \bi x_{\ell}|$ for $2 \leq \ell \leq K_1 -1$. Then, one can easily find the relative angle between $x_\ell$ and $x_1$. Therefore, a positive fraction $\delta$ of the active signal components can be found in location, phase and magnitude, and their corresponding balls can be colored. This ensures that the ball-coloring algorithm is initialized and it recovers almost all the signal components. Moreover, the number of extra measurements that we use in the active sensing stage is only $(3 \epsilon_2 + 2\delta)K$, and $(3 \epsilon_2 + 2\delta) > 0$ is a constant which can be made arbitrarily small. Thus, the total number of measurements that PhaseCode uses is still $4(1+\epsilon)K$ for arbitrarily small constant $\epsilon > 0$.    

The second method assumes that the location of a small fraction $\delta$ of active signal components is known, e.g. the low band components of an image in Fourier domain. Given this, one can use exactly the same procedure described in the first method to recover these components with $3\delta K - 2$ measurements. Again let $K_1 = \delta K$. Without loss of generality and for ease of notation assume that the known non-zero components are $x_1,x_2,\ldots,x_{K_1}$. Then the following measurements recover all the components using some simple algebra up to a global phase as described in \cite{PLR14}: $|x_{\ell}|, ~1 \leq \ell \leq K_1$, $|x_1 + x_{\ell}|, ~2 \leq \ell \leq K_1$, and 
$|x_1 + \bi x_{\ell}|, ~2 \leq \ell \leq K_1$.

\section{Two-Layer PhaseCode based on compressive sensing \\ and trig-measuremen-based phase-retrieval}

In this section, we propose a ``separation-principle" based two-layered approach to the compressive phase retrieval problem: a compressive-sensing outer layer and a deterministic trigonometric-measurement-based phase-retrieval inner layer.  A block diagram of our layered modular approach is shown in Figure \ref{fig:modular}.  As alluded to earlier,  a similar high-level approach has been independently proposed in recent work\cite{yapar}.  Although our works shares similarities in the high-level layering 
concepts,  the proposed algorithms are significantly different.  Principally, our measurement matrix is designed using sparse-graph codes, making our algorithm distinct from and more efficent than that of \cite{yapar}. 

\begin{figure}
\centering
    \includegraphics[width= \textwidth]{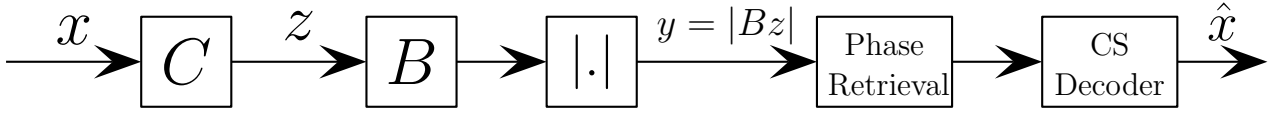}
  \caption{Block diagram of two-stage compressive phase retrieval. In this modular approach, the measurement matrix is the product of two matrices, $C$ and $B$. Matrix $C$ is an arbitrary compressive sensing (CS) matrix such that the CS decoder can recover signal $x$ from measurements $z = Cx$. matrix $B$ is a phase retrieval matrix that enables the algorithm to recover $z$ from the magnitude of linear measurements $y = |Bz|$.\label{fig:modular}} 
\end{figure}

Before we describe the technical details, let us overview the high-level proposed architecture.  In our two-layered design, the compressive-sensing layer  
operates as a linear phase-aware compressive measurement subsystem, while the trig-based phase-retrieval layer provides the desired abstraction between the targeted nonlinear phase-retrieval problem  and the induced linear compressive-sensing problem.  In our compressive sensing layer, we use the recent sparse-graph code based framework of \cite{Simon}, to design only $m_1 \simeq 2K$ measurements that guarantee perfect recovery for the compressive sensing problem, and in the phase-retrieval layer of the algorithm, we use $3m_1 -2$ deterministic measurements proposed in \cite{PLR14} that recover the phase of the $m_1$ measurements in phase and magnitude. Thus, we show that the signal can be reconstructed perfectly using only slightly more than $6K$ measurements.    Details follow.

\begin{proposition}{(Modular approach for compressive phase retrieval)} Let $C \in \mathbb{C}^{m_1 \times n}$ be a measurement matrix such that one can recover a $K$-sparse signal, $x\in \mathbb{C}^n$, from $m_1$ noiseless linear measurements, $z = Cx \in \mathbb{C}^{m_1}$, using a decoding algorithm of a certain computational complexity.
Assume that $z_1$ is non-zero. 
Then, one can recover a $K$-sparse signal $x$ from $(3m_1 -2)$ magnitude of linear measurements, $y = |BCx| = |Bz|$, using a modified decoding algorithm of the same computational complexity, where 
%matrix $B = [B_1^T,B_2^T,B_3^T]$ and $B_1 = I_{m_1 \times m_1}$ and
\begin{align*}
B &= [B_1^T,B_2^T,B_3^T] \in \mathbb{C}^{(3m_1 - 2) \times m_1}, \\
B_1 &= I_{m_1 \times m_1}, \\
B_2 &= \left( \begin{array}{cccccc}
1 & 1 & 0 & \ldots & ~ & 0\\
1 & 0 & 1 & 0 & \ldots & 0\\
1 & 0 & 0 & 1 & \ldots & 0 \\
~  &  ~ &  ~ &  \vdots  &   ~    & ~\\
1 & 0 & \ldots & ~ & 0 & 1  
\end{array}
\right) \in \mathbb{C}^{(m_1 - 1) \times m_1}, ~\text{and}\\
B_3 &= \left( \begin{array}{cccccc}
1 & \bi & 0 & \ldots & ~ & 0\\
1 & 0 & \bi & 0 & \ldots & 0\\
1 & 0 & 0 & \bi & \ldots & 0 \\
~  &  ~ &  ~ &  \vdots  &   ~    & ~\\
1 & 0 & \ldots & ~ & 0 & \bi  
\end{array}
\right) \in \mathbb{C}^{(m_1 - 1) \times m_1}.
\end{align*} 
\label{prop:trick}
\end{proposition}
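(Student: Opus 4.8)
The plan is to exploit the separation structure literally: first reconstruct the linear-measurement vector $z = Cx$ up to a single global phase from the $3m_1-2$ magnitudes, and only then feed the recovered $z$ into the assumed CS decoder to obtain $x$. Reading $y = |Bz|$ block by block, the three blocks $B_1, B_2, B_3$ supply exactly the magnitudes $|z_i|$ for $1 \le i \le m_1$, the values $|z_1 + z_\ell|$ for $2 \le \ell \le m_1$, and the values $|z_1 + \bi z_\ell|$ for $2 \le \ell \le m_1$; these count $m_1 + (m_1-1) + (m_1-1) = 3m_1-2$ measurements, matching the stated dimension of $B$.

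First I would remove the global-phase ambiguity by anchoring on the first coordinate: since $z_1 \neq 0$ by hypothesis, I may rotate the (unknown) global phase so that $z_1$ is real and positive, in which case $z_1 = |z_1|$ is known from the first block. For each $\ell \ge 2$ the value of $z_\ell$ relative to this anchor is then pinned down by the two polarization-type identities
\begin{align*}
|z_1 + z_\ell|^2 &= |z_1|^2 + |z_\ell|^2 + 2|z_1|\,\rr{z_\ell}, \\
|z_1 + \bi z_\ell|^2 &= |z_1|^2 + |z_\ell|^2 - 2|z_1|\,\ii{z_\ell},
\end{align*}
which follow by expanding $|z_1 + w|^2 = |z_1|^2 + |w|^2 + 2\rr{\overline{z_1} w}$ with $w = z_\ell$ and $w = \bi z_\ell$ and using $z_1 = |z_1|$. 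Since every quantity other than $\rr{z_\ell}$ and $\ii{z_\ell}$ is observed and $|z_1| \ne 0$, solving these equations recovers $\rr{z_\ell}$ and $\ii{z_\ell}$, hence $z_\ell$ exactly; doing this for all $\ell$ reconstructs the entire vector $z$ up to the single global phase I fixed at the outset.

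Finally I would hand the reconstructed $z$ to the given CS decoder. Because $C$ is linear, replacing $z$ by $e^{\bi\phi}z$ is the same as replacing $x$ by $e^{\bi\phi}x$, so the decoder returns $x$ up to exactly that same global phase, which is all that phase retrieval asks for. For the complexity claim, I would note that the phase-retrieval preprocessing uses only a constant number of arithmetic operations per coordinate $\ell$, i.e. $\mathcal{O}(m_1) = \mathcal{O}(K)$ work, which does not change the order of complexity of the underlying CS decoder; the ``modified decoding algorithm'' is therefore simply this linear preprocessing followed by the assumed decoder.

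I expect the only real subtlety—rather than a genuine obstacle—to be the bookkeeping of the global phase. One must verify that anchoring on $z_1$ is well defined (guaranteed precisely by the hypothesis $z_1 \neq 0$, which is why it is imposed) and that the residual global-phase freedom propagates cleanly through the linear map $C$, so that passing the recovered $z$ to a black-box CS decoder introduces no additional ambiguity or inconsistency. Everything else reduces to the elementary algebra above.
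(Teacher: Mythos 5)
Your proof is correct and follows essentially the same route as the paper's: recover $z$ up to a global phase anchored at the nonzero $z_1$ from the three measurement blocks, then hand the reconstructed $z$ to the assumed linear CS decoder, with only $\mathcal{O}(m_1)$ extra work. The single (cosmetic) difference is that you solve for $\rr{z_\ell}$ and $\ii{z_\ell}$ directly via polarization identities, whereas the paper computes $\cos(\phi)$ from the law of cosines and then resolves the resulting $\pm$ sign ambiguity using the $|z_1 + \bi z_\ell|$ measurement; your variant is, if anything, marginally cleaner because it avoids the explicit sign-disambiguation step.
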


\begin{proof}
%Let $y_{i:j} = \left(y_i, y_{i+1}, \ldots, y_{j}\right)$. Then,
%\begin{align*}
%y_{1:m_1} &= \left( |z_1|, |z_2|, \ldots, |z_{m_1}| \right) \\
%y_{m_1+1:2m_1-1} &= \left( |z_1 + z_2|, |z_1 + z_3|, \ldots, |z_1 + z_{m_1}| \right) \\
%y_{2m_1:3m_1-2} &= \left( |z_1 + \bi z_2|, |z_1 + \bi z_3|, \ldots, |z_1 + \bi z_{m_1}| \right)
%\end{align*}
Since $|z_1| \neq 0$, one can retrieve the phase of $z_\ell$ for all $\ell$ up to a global phase using measurements $y_1 = |z_1|$, $y_{\ell} = |z_\ell|$, $y_{m_1 + \ell - 1} = |z_1 + z_\ell|$ and $y_{2m_1 + \ell - 2} = |z_1 + \bi z_\ell|$ for $\ell \geq 2$ as follows. Let $z_\ell = |z_\ell| e^{\bi \phi}$. Then, by the cosine law we have
$$  
\cos(\phi) = \frac{|z_1 + z_\ell|^2 - |z_1|^2 - |z_\ell|^2}{2|z_1 z_\ell|} = \alpha.
$$
Assuming that $\cos^{-1}(\alpha)$ for  $-1 \leq \alpha \leq 1$ returns values between $0$ and $\pi$, we have $\phi = \pm \cos^{-1}(\alpha)$. Thus, we need to resolve the ambiguity of the sign. This ambiguity can be resolved using the other measurement: $|z_1 + \bi z_\ell|$.

Thus, $z$ is recovered up to a global phase, and one can decode $x$ up to a global phase from it using the decoding algorithm for linear measurements. 
\end{proof}

Note that the proposed approach is highly flexible due to its modularity: one can choose any compressive measurement matrix $C$ that possesses suitable properties that are specific to applications. 

Recently in \cite{Simon}, it is shown that a $K$-sparse signal, $x\in \mathbb{C}^n$, can be perfectly recovered with high probability using only $m_1 = 2K(1+\epsilon)$ noiseless linear measurements. The measurement matrix is designed based on irregular left-degree sparse-graph codes. Using this compressive measurement scheme alongside Proposition \ref{prop:trick}, we provide an efficient compressive phase retrieval scheme.
\begin{corollary}
For arbitrarily small $\epsilon > 0$, with high probability, one can \emph{perfectly} recover a $K$-sparse signal with magnitude of $6K(1+\epsilon)-2$ linear measurements and optimal complexity $\mathcal{O}(K)$.
\end{corollary}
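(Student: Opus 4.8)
The plan is to simply compose the two ingredients that immediately precede the statement: the sparse-graph-code compressive-sensing scheme of \cite{Simon} as the outer layer, and the deterministic trigonometric phase-retrieval construction of Proposition \ref{prop:trick} as the inner layer. First I would invoke \cite{Simon} to obtain a measurement matrix $C \in \mathbb{C}^{m_1 \times n}$ with $m_1 = 2K(1+\epsilon)$ rows such that, with high probability, its associated $\mathcal{O}(K)$-complexity decoder recovers any $K$-sparse $x$ from the noiseless linear measurements $z = Cx$. This pins down $m_1 = 2K(1+\epsilon)$, so that the phase-retrieval layer consumes exactly $3m_1 - 2 = 6K(1+\epsilon) - 2$ magnitude measurements, which already matches the claimed sample complexity.

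Next I would apply Proposition \ref{prop:trick} with this particular $C$. The proposition is stated for an arbitrary $C$ admitting a linear-measurement decoder of some complexity, and it manufactures the fixed matrix $B \in \mathbb{C}^{(3m_1-2)\times m_1}$ together with a modified decoder of the \emph{same} complexity that recovers $x$ from $y = |BCx|$. The single hypothesis requiring care is that the reference coordinate $z_1 = \langle c_1, x\rangle$ be non-zero, since $B$ (and in particular the choice of which coordinate serves as the reference) must be fixed before any measurement is taken. The cleanest fix is to dedicate the reference row of $C$ to a single dense row with generic continuous entries: for any non-zero $x$ the functional $\langle c_1, x\rangle$ then vanishes with probability zero, so $z_1 \neq 0$ holds almost surely. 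This costs one extra measurement, absorbed into the $\epsilon$ slack, leaving the asymptotic count $6K(1+\epsilon)-2$ unchanged; alternatively one argues that in the random sparse-graph construction at least one measurement is non-zero with high probability and relabels it as the first coordinate.

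Finally I would tally the complexity and the success probability. The inner layer recovers $z$ coordinate-by-coordinate via the cosine-law computation of Proposition \ref{prop:trick}, one evaluation plus one sign disambiguation (using the $|z_1 + \bi z_\ell|$ measurement) per $\ell \geq 2$, for a total of $\mathcal{O}(m_1) = \mathcal{O}(K)$ operations; feeding the recovered $z$ into the $\mathcal{O}(K)$ decoder of \cite{Simon} then returns $x$ up to a global phase, the phase ambiguity propagating consistently through the linear map. Since the inner layer succeeds deterministically once $z_1 \neq 0$ and the outer layer succeeds with the high probability guaranteed by \cite{Simon}, a union bound yields high-probability perfect recovery with overall complexity $\mathcal{O}(K)$. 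The only genuine obstacle is the bookkeeping around the pre-fixed reference coordinate; past that, the result is a direct substitution of $m_1 = 2K(1+\epsilon)$ into Proposition \ref{prop:trick}.
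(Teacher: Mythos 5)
Your proposal is correct and is essentially the paper's own (implicit) argument: the corollary is stated without a separate proof precisely because it is the direct composition of the $m_1 = 2K(1+\epsilon)$ sparse-graph compressive-sensing scheme of \cite{Simon} with Proposition \ref{prop:trick}, yielding $3m_1 - 2 = 6K(1+\epsilon)-2$ measurements and $\mathcal{O}(K)$ complexity. Your additional care in ensuring the reference coordinate $z_1 \neq 0$ (via a dedicated generic row or relabeling) addresses a hypothesis the paper simply assumes in the proposition statement, so it is a welcome but non-essential refinement rather than a different route.
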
 

\section{Conclusion}

\begin{figure}[h]
\centering
    \includegraphics[width= 0.7\textwidth]{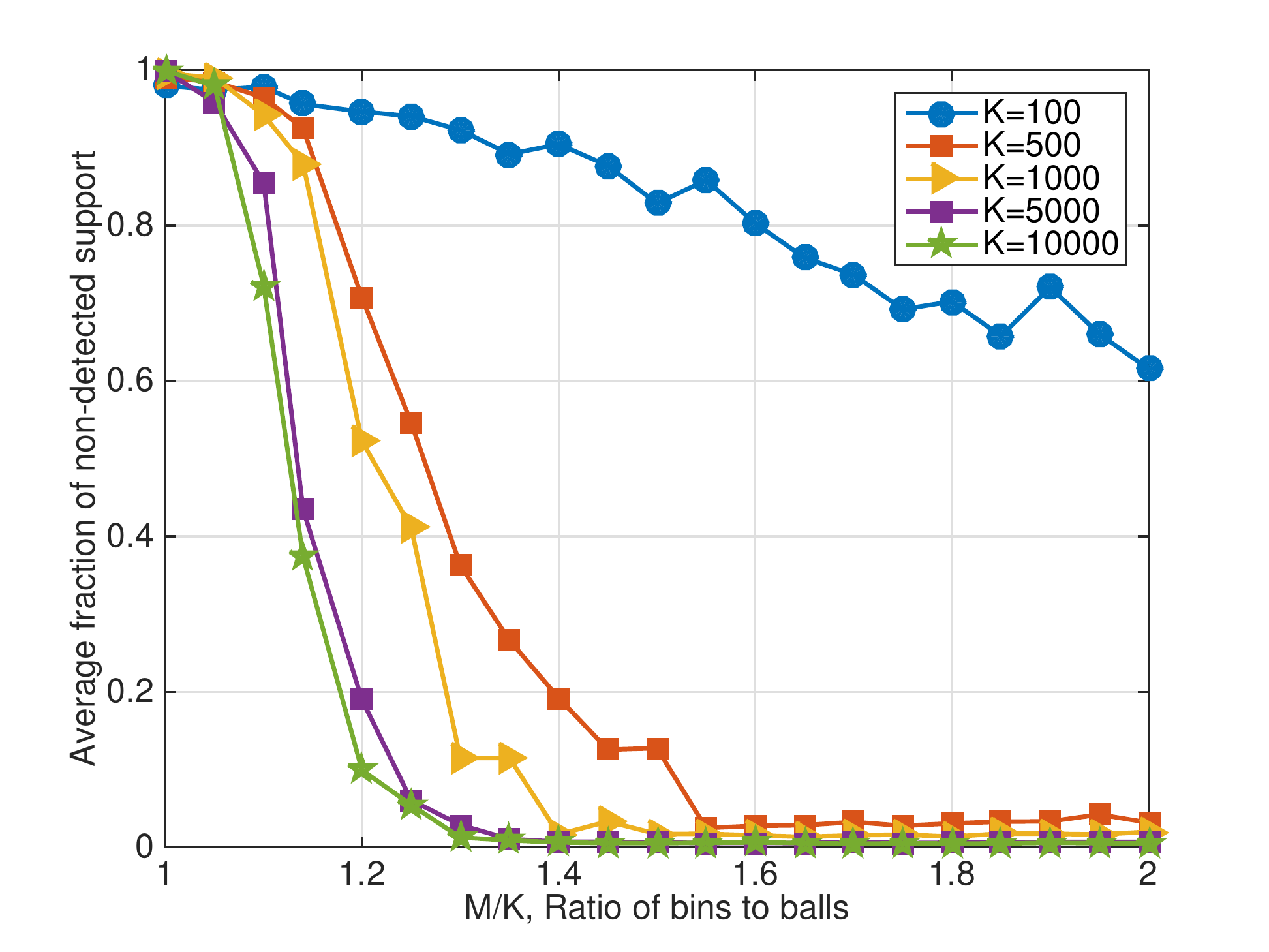}
  \caption{Performance of irregular PhaseCode {\em without initialization phase}:  We simulate the ball-coloring algorithm without any initialization phase and varying the ratio of the number of right nodes in the graph (bins) to the number of non-zero signal components (balls), $M/K$. We consider several values of $K$ from $100$ to $10000$, and fix the length of the signal as $n = 20000$. %For each value of $K$, we set $D=\frac{K}{10}$. 
The plotted simulation results show a clear trend: as $K$ increases, the sample complexity required to decode successfully decreases and it approaches the capacity that is $M/K = 1$. For instance, when $K = 10000$, the coloring algorithm successfully decodes all symbols but a small fraction ($\epsilon \leq 0.005$) when $M/K = 1.3$.} 
\end{figure}

In this paper, we have considered the problem of recovering a $K$-sparse complex signal $x \in \mathbb{C}^n$ from $m$ intensity measurements of the form $|a_i x|, ~1 \leq i \leq m$, where $a_i$ is a measurement row vector. It is well-known that the minimum number of measurements to perfectly recover the signal is $4K - o(K)$ for compressive phase retrieval. We have proposed an irregular-left-degree PhaseCode algorithm that can recover the signal almost perfectly using only $4K(1+\epsilon)$ measurements for arbitrarily small $\epsilon$ with high probability as $K$ gets large under some mild assumptions.  In order to initialize our algorithm, we rely on either an active sensing phase with $\epsilon K$ measurements, or we assume that the locations of an arbitrarily small fraction of the non-zero components are known, which is very reasonable in many applications of interest, e.g. in the case of natural objects like images, we know that low-frequency bands are active.

We have also proposed another variant of PhaseCode algorithm that is based on a separation architecture involving compressive sensing using sparse-graph codes. The measurement matrix of this algorithm is a concatenation of deterministic measurements that are used to recover the phase of the measurements required for compressive sensing, and measurements based on sparse-graph codes for compressive sensing which are now phase-aware. We have shown that the signal can be reconstructed perfectly using only slightly more than $6K$ measurements.    

We conclude the paper by emphasizing an empirical observation that is useful for practical applications. In Section \ref{sec:initial}, we explained how one can initialize the coloring algorithm by means of feedback or prior knowledge about a few non-zero components of the signal. In our previous work \cite{PLR14}, we showed that a linear-size component of colored balls can be formed with around $14K$ measurements (which is far from the fundamental limit $4K$) in the regular construction without using feedback or any prior knowledge about the signal. However, via simulations, we empirically observe that our coloring algorithm can be used even without the initialization stage for irregular PhaseCode. We simulate irregular PhaseCode without any initialization stage with the following parameters. We vary $K$ from $100$ to $10000$, and fix the length of signal as $n = 20000$. We observe that even without the initialization stage, the coloring algorithm successfully recovers the signal with a sample complexity that is close to the fundamental limit. For example, when $K = 10000$, the coloring algorithm successfully decodes the signal with only $4 \times 1.3K = 5.2K$ measurements. An interesting future direction is to theoretically investigate whether irregular PhaseCode without initializing is also capacity-approaching.  Finally, in this work, we have focused on the noiseless compressive phase-retrieval problem, predominantly to elucidate our sparse-graph based capacity-approaching construction.  Of course, in most practical settings, one needs to consider noisy observations.  Our proposed framework can indeed be extended to be robust to noise, and this will be part of our future dissemination.

%\textcolor{red}{Discuss noisy?!}

\end{document}